\documentclass[runningheads]{llncs}
\RequirePackage{fix-cm}
\RequirePackage{filecontents}
\usepackage{geometry}
\usepackage{microtype}
\usepackage[utf8]{inputenc}
\usepackage[T1]{fontenc}
\usepackage[english]{babel}
\usepackage{mathtools}
\usepackage{amssymb}
\usepackage[undo-recent-deprecations]{expl3}
 \usepackage{ebproof}

\newcommand\hypo{\Hypo}
\newcommand\infer{\Infer}

\usepackage{thm-restate}%
\usepackage{stmaryrd}
\usepackage[inline]{enumitem}
\usepackage{cases}
\usepackage{empheq}
\usepackage[OMLmathsfit]{isomath}
\usepackage{multicol}
\usepackage{xspace}
\usepackage{csquotes}

\usepackage{tikz}
\usetikzlibrary{%
	arrows.meta, %
	calc,
	backgrounds,
	fit}
\usepackage{graphicx}

\usepackage{stackengine,scalerel,graphicx,trimclip,wasysym}
\usepackage{xcolor,colortbl}
\usepackage{tcolorbox}
\tcbset{ halign = flush center, coltitle=black, colbacktitle=black!10!white, colframe=black!10!white, titlerule=0pt, colback=white} %
\usepackage{mathpartir}

\usepackage{hyperref}
\makeatletter
\AtBeginDocument{%
	\def\doi#1{\url{https://doi.org/#1}}}
\makeatother

\usepackage{fontawesome}

\savestack\zigzagtextstyle{\vphantom{()}\scalebox{.86666}[1]{\kern-.5pt\AC\kern-.5pt}}
\savestack\onelinetextstyle{\vphantom{()}\scalebox{1}[1]{\kern-1pt{$-$}\kern-1pt}}
\savestack\twolinetextstyle{\vphantom{()}\scalebox{1}[1]{\kern-1pt{$=$}\kern-1pt}}
\savestack\ZigZagtextstyle{%
 $\vphantom{()}\smash{\raisebox{-.002em}{$\vcenter{\hbox{%
 \stackengine{-.805em}{\zigzagtextstyle}{\zigzagtextstyle}{O}{c}{F}{F}{S}}}$}}$}
\newlength\repwidth
\repwidth=\wd\zigzagtextstylecontent\relax
\newcommand\rightarrowhead{\clipbox{4pt -2pt 0pt -2pt}{$\rightarrow$}}
\newcommand\Rightarrowhead{\clipbox{4pt -2pt 0pt -2pt}{$\Rightarrow$}}
\newcommand\rrightarrowhead{\clipbox{5.5pt -2pt 0pt -2pt}{$\rightarrow\kern-8.5pt\rightarrow$}}
\newcommand\RRightarrowhead{\clipbox{5.5pt -2pt 0pt -2pt}{$\Rightarrow\kern-8.5pt\Rightarrow$}}
\newcommand\fXarrowtextstyle[2]{$\vphantom{()}\smash{\vcenter{\hbox{\kern-.03\repwidth%
 \clipbox{-.03\repwidth{} 0pt .527\repwidth{} 0pt}{#1}}}}#2$}
\savestack\fzigarrowtextstyle{\fXarrowtextstyle{\zigzagtextstyle}{\rightarrowhead}}
\savestack\fZigarrowtextstyle{\fXarrowtextstyle{\ZigZagtextstyle}{\Rightarrowhead}}
\savestack\flinearrowtextstyle{\fXarrowtextstyle{\onelinetextstyle}{\rightarrowhead}}
\savestack\fLinearrowtextstyle{\fXarrowtextstyle{\twolinetextstyle}{\Rightarrowhead}}
\savestack\fzzigarrowtextstyle{\fXarrowtextstyle{\zigzagtextstyle}{\rrightarrowhead}}
\savestack\fZZigarrowtextstyle{\fXarrowtextstyle{\ZigZagtextstyle}{\RRightarrowhead}}
\savestack\fllinearrowtextstyle{\fXarrowtextstyle{\onelinetextstyle}{\rrightarrowhead}}
\savestack\fLLinearrowtextstyle{\fXarrowtextstyle{\twolinetextstyle}{\RRightarrowhead}}
\newcommand\zigzag{\scalerel*{\zigzagtextstyle}{()}}

\newcommand\oneline{\scalerel*{\onelinetextstyle}{()}}

\newcommand\fzigarrow{\scalerel*{\fzigarrowtextstyle}{()}}

\newcommand\flinearrow{\scalerel*{\flinearrowtextstyle}{()}}

\newcommand\fllinearrow{\scalerel*{\fllinearrowtextstyle}{()}}

\newcommand\zigzagarrow[1][]{\Xarrow{\zigzag}{\fzigarrow}{#1}{-.65}}

\newcommand\linearrow[1][]{\Xarrow{\oneline}{\flinearrow}{#1}{-.65}}

\newcommand\llinearrow[1][]{\Xarrow{\oneline}{\fllinearrow}{#1}{-.65}}

\newcommand\Xarrow[4]{\ThisStyle{\mathrel{\Xarrowhelp{#1#2}{#3}{#1}{#4}}}}
\newcommand\Xarrowhelp[4]{%
 \setbox0=\hbox{$\SavedStyle#1$}%
 \setbox2=\hbox{$\SavedStyle_{\,\,#2\,\,}$}%
 \ifdim\wd0<\wd2\relax\Xarrowhelp{#3#1}{#2}{#3}{#4}%
 \else\stackengine{#4\LMex}{\copy0}{\copy2\,}{O}{c}{F}{T}{S}\fi%
}%
\newcommand{\efs}[1][]{\linearrow[#1]}
\newcommand{\ebs}[1][]{\zigzagarrow[#1]}
\newcommand{\efbs}[1][]{\llinearrow[#1]}
\newcommand{\re}{?}
\newcommand{\redl}[1]{\efs[#1]}

\newcommand{\fwlts}[2]{\efs[#1:#2]}
\newcommand{\bwlts}[2]{\ebs[#1:#2]}
\newcommand{\fbwlts}[2]{\efbs[#1:#2]}

\newcommand{\BF}{\textnormal{B\&F}\xspace}
\newcommand{\SBF}{\textnormal{SB\&F}\xspace}

\newcommand{\emptymem}{\emptyset}
\newcommand{\orig}[1]{O_{#1}} \newcommand{\fork}{\curlyvee}
\newcommand{\bs}{\backslash}
\newcommand{\Null}{\_}

\DeclareMathOperator{\ids}{\mathsf{I}}
\DeclareMathOperator{\idst}{\mathsf{IS}}
\DeclareMathOperator{\ip}{\mathsf{ip}}
\DeclareMathOperator{\iZ}{\mathsf{I}{\mathbb{Z}}}

\newcommand{\spli}{{\scalebox{1}[0.89]{$\cap$}}}
\newcommand{\unif}{{\scalebox{1}[0.89]{$\cup^?$}}}

\newcommand{\Left}{\mathrm{L}}
\newcommand{\Right}{\mathrm{R}}

\newcommand{\rev}{^{\bullet}}
\newcommand{\eq}{\sim}

\DeclareMathOperator{\id}{id}

\DeclareMathOperator{\zip}{zip}

\newcommand{\dup}{\delta^?}

\newcommand{\seed}{\mathsf{s}}
\newcommand{\col}{\circ}

\newcommand{\fn}[1]{\ensuremath{\mathrm{fn}(#1)}}

\newcommand{\names}{\ensuremath{\mathsf{N}}}
\newcommand{\labels}{\ensuremath{\mathsf{L}}}

\newcommand{\out}[1]{\overline{#1}}

\newcommand{\rel}{\ensuremath{\mathcal{R}}}
\DeclarePairedDelimiter{\mem}{\langle}{\rangle}

\newcommand{\memstack}{m_s}
\newcommand{\mempair}{m_p}

\newcommand{\nat}{\mathbb{N}}

\newcommand{\st}{s.t.\ }

\newcommand*{\resp}{resp.\@\xspace}
\newcommand{\BNFsepa}{\enspace | \enspace}

\DeclarePairedDelimiter{\encm}{\Lbag}{\Rbag}
\DeclarePairedDelimiter{\encmp}{\llparenthesis}{\rrparenthesis}

\AtBeginDocument{

	\def\equationautorefname~#1\null{(#1)\null}
}

\newcommand*{\eg}{e.g.\@\xspace}

\newcommand*{\ie}{i.e.\@\xspace}

\makeatletter
\newcommand*{\etc}{%
 \@ifnextchar{.}%
 {etc}%
 {etc.\@\xspace}%
}
\makeatother

\newcommand{\compat}{\perp}
\newcommand{\concat}{\mathbin{{+}\mspace{-8mu}{+}}} %

\newcommand{\cont}[1]{C[#1]}
\newcommand{\icont}[1]{C^{\iproc}[#1]}
\newcommand{\rcont}[1]{C^{\rproc}[#1]}
\newcommand{\mcont}[1]{M[#1]}

\newcommand{\proc}{\ensuremath{\mathsf{P}}}
\newcommand{\Mem}{\ensuremath{\mathsf{M}}}
\newcommand{\iproc}{\ensuremath{\mathsf{I}}}
\newcommand{\rproc}{\ensuremath{\mathsf{R}}}

\newcommand{\frestr}[1]{\mathord{\upharpoonright_{#1}}}

\title{Enabling Replications and Contexts in Reversible Concurrent Calculus%
	\texorpdfstring{\thanks{This work has been partially supported by French ANR project DCore ANR-18-CE25-0007.}}{}}
\titlerunning{Enabling Replications and Contexts in Reversible Concurrent Calculus}

\author{%
	Clément Aubert\inst{1}\orcidID{0000-0001-6346-3043} %
	\and %
	Doriana Medić\inst{2}\orcidID{0000-0002-7163-5375}
}
\institute{%
	School of Computer \& Cyber Sciences, Augusta University, USA, %
	\email{caubert@augusta.edu}
	\and %
	Focus Team/University of Bologna, Inria, Sophia Antipolis, France, %
	\email{doriana.medic@gmail.com}
}

\makeatletter%
\begin{document}
\maketitle

\begin{abstract}
	Existing formalisms for the algebraic specification and representation of networks of reversible agents suffer some shortcomings.
	Despite multiple attempts, reversible declensions of the Calculus of Communicating Systems (CCS) do not offer satisfactory adaptation of notions that are usual in \enquote{forward-only} process algebras, such as replication or context.
	They also seem to fail to leverage possible new features stemming from reversibility, such as the capacity of distinguishing between multiple replications, based on how they replicate the memory mechanism allowing to reverse the computation.
	Existing formalisms disallow the \enquote{hot-plugging} of processes during their execution in contexts that also have a past.
	Finally, they assume the existence of \enquote{eternally fresh} keys or identifiers that, if implemented poorly, could result in unnecessary bottlenecks and look-ups involving all the threads.

	In this paper, we begin investigating those issues, by first designing a process algebra endowed with a mechanism to generate identifiers without the need to consult with the other threads.
	We use this calculus to recast the possible representations of non-determinism in CCS, and as a by-product establish a simple and straightforward definition of concurrency.
	Our reversible calculus is then proven to satisfy expected properties, and allows to lay out precisely different representations of the replication of a process with a memory.
	We also observe that none of the reversible bisimulations defined thus far are congruences under our notion of \enquote{reversible} contexts.
	\keywords{%
		Formal semantics%
		\and
		Process algebras and calculi%
		\and
		Reversible replication
	}
\end{abstract}

\section{Introduction: Filling the Blanks in Reversible Process Algebras}
\textbf{Reversibility's Future} is intertwined with the development of formal models for analyzing and certifying concurrent behaviors.
Even if the development of quantum computers~\cite{Matthews2021}, CMOS adiabatic circuits~\cite{Frank2020} and computing biochemical systems promise unprecedented efficiency or \enquote{energy-free} computers, it would be a mistake to believe that when one of those technologies---each with their own connection to reversibility---reaches a mature stage, distribution of the computing capacities will become superfluous.
On the opposite, the future probably resides in connecting together computers using different paradigms (\ie, \enquote{traditional}, quantum, biological, etc.), and possibly themselves heterogeneous (for instance using the \enquote{classical control of quantum data} motto~\cite{Perdrix2006}).
In this coming situation, \enquote{traditional} model-checking techniques will face an even worst state explosion problem in presence of reversiblility, that \eg the usual \enquote{back-tracking} methods %
will likely fail to circumvent.
Due to the notorious difficulty of connecting heterogeneous systems correctly %
and the \enquote{volatile} nature of reversible computers---that can erase all trace of their previous actions---, it seems absolutely necessary to design languages for the specification and verification of reversible distributed systems.

\textbf{Process Algebras} offer an ideal touch of abstraction while maintaining implementable specification and verification languages. %
In the family of process calculi, the Calculus of Communicating Systems (CCS)~\cite{Milner1980%
} plays a particular role, both as seminal work and as direct root of numerous systems (\eg \(\pi\)-~\cite{Sangiorgi2001}, Ambient~\cite{Zappa2005}, applied~\cite{Abadi2018} and distributed~\cite{Hennessy2007} calculi).
Reversible CCS (RCCS)~\cite{Danos2004} and CCS with keys (CCSK)~\cite{Phillips2006} are two extensions to CCS providing a better understanding of the mechanisms underlying reversible concurrent computation---and they actually turned out to be the two faces of the same coin~\cite{Lanese2019}.
Most~\cite{Arpit2017,Cristescu2015b,Medic2020,Mezzina2017}---if not all---of the latter systems developed to enhance the expressiveness with some respect (rollback operator, name-passing abilities, probabilistic features) stem from one approach or the other.
However, those two systems, as well as their extensions, both share the same drawbacks, in terms of missing features and missing opportunities.

\textbf{An Incomplete Picture} is offered by RCCS and CCSK, as they miss \enquote{expected} features despite repetitive attempts.
For instance, to our knowledge, no satisfactory notion of context was ever defined: the discussed notions~\cite{Aubert2016jlamp} do not allow of the \enquote{hot-plugging} of a process with a past into a context with a past as well.
As a consequence, the notions of congruence remains out of reach, forbidding the study of bisimilarities---though they are at the core of process algebras~\cite{Sangiorgi2001b}. %
Also, recursion and replication are different~\cite{Palamidessi2005}, but %
only recursion have been investigated~\cite{Graversen2018,Krivine2006} or mentioned~\cite{Danos2004,Danos2005}, and only for \enquote{memory-less} processes.
Stated differently, the study of the duplication of systems with a past have been left aside.

\textbf{Opportunities Have Been Missed} as previous process algebras are \emph{conservative extensions of restricted versions of CCS}, instead of considering \enquote{a fresh start}.
For instance, reversible calculi inherited the sum operator in its guarded version: while this restriction certainly makes sense when studying (weak) bisimulations for forward-only models, we believe it would be profitable to suspend this restriction and consider \emph{all} sums, to establish their specificities and interests in the reversible frame.
Also, both RCCS and CCSK have impractical mechanisms for keys or identifiers: aside from supposing \enquote{eternal freshness}---which requires to \enquote{ping} all threads when performing a transition, creating a potential bottle-neck---, they also require to inspect, in the worst case scenario, \emph{all the memories of all the threads} before performing a backward transition.

\textbf{Our Proposal} for \enquote{yet} another language is guided by the desire to \enquote{complete the picture}, but starts from scratch instead of trying to \enquote{correct} existing systems\footnote{Of course, due credit should be given for those previous calculi, that strongly inspired ours, and into which our proposal can (partially) be translated, as explained in \autoref{sec:translation}.}.
We start by defining an \enquote{identified calculus} that sidesteps the previous limitations of the key and memory mechanisms and considers multiple declensions of the sum: \begin{enumerate*}
	\item the summation~\cite[p.~68]{Milner1980}, that we call \enquote{non-deterministic choice} and write \(\ovee\),~\cite{Visme19},
	\item the guarded sum, \(+\), and
	\item the internal choice, \(\sqcap\), inspired from the Communicating Sequential Processes (CSP)~\cite{Hoare1985}---even if we are aware that this operator can be represented using prefix, parallel composition and restriction~\cite[p. 225]{Amadio2016} in forward systems, we would like to re-consider all the options in the reversible set-up, where \enquote{representation} can have a different meaning.%
\end{enumerate*}
Our formalism meets usual criterion, and allows to sketch interesting definitions for contexts, that allows to prove that, even under a mild notion of context, the usual bisimulation for reversible calculi is not a congruence.
We also lay out the alternatives to define replication, and justify precisely why duplicated processes should have their memory \enquote{marked}.
As a by-product, we obtain a notion of concurrency, both for forward and forward-and-backward calculi, that rests solely on identifiers and can be checked locally.

\textbf{Our Contribution} tries to lay out solid foundation to study reversible process algebras in all generality, and opens some questions that have been left out.
Our detailed frame explicits aspects not often acknowledged, but does not yet answer questions such as \enquote{what is the right structural \emph{congruence} for reversible calculi}~\cite{Aubert2020d}: while we can define a structural \emph{relation} for our calculus, we would like to get a better take on what a congruence for reversible calculi is before committing.
How our three sums differ and what benefits they could provide is also left for future work, possibly requiring a better understanding of non-determinism in the systems we model.

All proofs and some ancillary definitions can be found in Appendix~\ref{sc:app}.

\section{A Forward-Only Identified Calculus With Multiple Sums}

We enrich CCS's processes and labeled transition system (LTS) with identifiers needed to define reversible systems: indeed, in addition to the usual labels, the reversible LTS developed thus far all annotate the transition with an additional key or identifier that becomes part of the memory.
This development can be carried out independently of the reversible aspect, and could be of independent interest.
Our formal \enquote{identifier structures} allows to precisely define how such identifiers could be generated while guaranteeing eternal freshness of the identifiers used to annotate the transitions (\autoref{lem:unicity}) of our calculus that extends CCS conservatively (\autoref{lm:ccsidentified}).

\subsection{Preamble: Identifier Structures, Patterns, Seeds and Splitters}
\label{sec:ident}

\begin{definition}[Identifier structure and pattern]
	An \emph{identifier structure} \(\idst = (\ids, \gamma, \oplus)\) is \st
	\begin{itemize}
		\item \(\ids\) is an infinite set of \emph{identifiers}, with a partition between infinite sets of \emph{atomic identifiers} \(\ids_a\) and \emph{paired identifiers} \(\ids_p\), \ie \(\ids_a \cup \ids_p = \ids\), \(\ids_a \cap \ids_p = \emptyset\),
		\item \(\gamma : \nat \to \ids_a\) is a bijection called a \emph{generator},
		\item \(\oplus : \ids_a \times \ids_a \to \ids_p\) is a bijection called a \emph{pairing function}.
	\end{itemize}

	Given an identifier structure \(\idst\), an \emph{identifier pattern \(\ip\)} is a tuple \((c,s)\) of integers called \emph{current} and \emph{step} such that \(s > 0\).
	The \emph{stream} of atomic identifiers generated by an identifier pattern \((c,s)\) is \(\idst(c,s) = \gamma(c), \gamma(c + s), \gamma(c + s + s), \gamma(c +s +s+s),\hdots\).
\end{definition}

\begin{example}
	\label{ex:ident-struct}
	Traditionally, a pairing function is a bijection between \(\nat \times \nat\) and \(\nat\), and the canonical examples are Cantor's bijection and \((m, n) \mapsto 2^m(2n+1)-1\)~\cite{Rosenberg2003,Szudzik2017}.
	Let \(\mathrm{p}\) be any of those pairing function, and let \(\mathrm{p}^{-}(m, n) = -(\mathrm{p}(m, n))\).

	Then, \(\iZ= (\mathbb{Z}, \id_{\nat}, \mathrm{p}^{-})\) is an identifier structure, with \(\ids_a = \nat\) and \(\ids_p = \mathbb{Z}^{-}\).
	As an example, the streams \(\iZ(0, 2)\) and \(\iZ(1, 2)\) are the series of even and odd numbers.
\end{example}

Starting now, we assume given a particular identifier structure \(\idst\) and use \(\iZ\) in our examples.

\begin{definition}[Compatible identifier patterns]
	Two identifier patterns \(\ip_1\) and \(\ip_2\) are \emph{compatible}, \(\ip_1 \compat \ip_2\), if the identifiers in the streams \(\idst(\ip_1)\) and \(\idst(\ip_2)\) are all different.
\end{definition}

\begin{definition}[Splitter]
	\label{def:split}
	A \emph{splitter} is a function \(\spli\) from identifier pattern to pairs of compatible identifier patterns, and we let \(\spli_1(\ip)\) (\resp \(\spli_2(\ip)\)) be its first (\resp second) projection.
\end{definition}

We now assume that every identifier structure \(\idst\) is endowed with such a function.

\begin{example}
	\label{ex:compatible}
	For \(\iZ\) the obvious splitter is \(\spli(c, s) = ((c, 2 \times s), (c+s, 2 \times s)) \). %
	Note that \(\spli (0, 1) = ((0, 2), (1, 2))\), and it is easy to check that the two streams \(\iZ(0, 2)\) and \(\iZ(1, 2)\) %
	have no identifier in common.
	However, \((1, 7)\) and \((2, 13)\) %
	are not compatible in \(\iZ\), as their streams both contain \(15\).
\end{example}

\begin{definition}[Seed (splitter)]
	\label{def:seed}
	A \emph{seed} \(\seed\) is either an identifier pattern \(\ip\), or a pair of seeds \((\seed_1, \seed_2)\) such that all the identifier patterns occurring in \(\seed_1\) and \(\seed_2\) are pairwise compatible.
	Two seeds \(\seed_1\) and \(\seed_2\) are \emph{compatible}, \(\seed_1 \compat \seed_2\), if all the identifier patterns in \(\seed_1\) and \(\seed_2\) are compatible.

	We extend the splitter \(\spli\) and its projections \(\spli_j\) (for \(j \in \{1, 2\}\)) to functions from seeds to seeds that we write \([\spli]\) and \([\spli_j]\) defined by
	\begin{align*}
		[\spli](\ip)              & = \spli (\ip)                          &  &  & [\spli_j](\ip)              & = \spli_j (\ip)                            \\
		[\spli](\seed_1, \seed_2) & = ([\spli](\seed_1), [\spli](\seed_2)) &  &  & [\spli_j](\seed_1, \seed_2) & = ([\spli_j](\seed_1), [\spli_j](\seed_2))
	\end{align*}

\end{definition}

That this operation is well-defined is proven in \autoref{sec:app-iden}.

\begin{example}
	A seed over \(\iZ\) is \((\id \times \spli) (\spli(0, 1)) =( (0, 2) , ((1, 4),(3, 4)))\).
\end{example}

\subsection{Identified CCS and Unicity Property}
\label{sect:ident-ccs}

We will now discuss and detail how a general version of (forward-only) CCS can be equipped with identifiers structures so that every transition will be labeled not only by a (co-)name, \(\tau\) or \(\upsilon\)\footnote{%
	We use this label to annotate the \enquote{internally non-deterministic} transitions introduced by the operator \(\sqcap\). It can be identified with \(\tau\) for simplicity if need be, and as \(\tau\), it does not have a complement.
}%
, but also by an identifier that is guaranteed to be unique in the trace.

\begin{definition}[Names, co-names and labels]
	Let \(\names=\{a,b,c,\dots\}\) be a set of \emph{names} and \(\out{\names}=\{\out{a},\out{b},\out{c},\dots\}\) its set of \emph{co-names}.
	We define the set of labels \(\labels = \names \cup \out{\names} \cup\{\tau, \upsilon\}\),
	and use \(\alpha\) (\resp \(\mu\), \(\lambda\)) to range over \(\labels\) (\resp \(\labels \bs \{\tau\}\), \(\labels \bs \{\tau, \upsilon\}\)).
	The \emph{complement} of a name is given by a bijection \(\out{\cdot}:\names \to \out{\names}\), whose inverse is also written \(\out{\cdot}\).%
\end{definition}

\begin{definition}[Operators]
	\label{def:operators}
	\begin{multicols}{2}
		\noindent
		\begin{align}
			P,Q \coloneqq & \lambda. P \tag{Prefix}             \\
			              & P \mid Q \tag{Parallel Composition} \\
			              & P \bs \lambda \tag{Restriction}
		\end{align}
		\begin{align}
			 & P \ovee Q \tag{Non-deterministic choice}               \\
			 & (\lambda_1. P_1) + (\lambda_2 . P_2) \tag{Guarded sum} \\
			 & P \sqcap Q \tag{Internal choice}
		\end{align}
	\end{multicols}
	As usual, the inactive process \(0\) is not written when preceded by a prefix, and we call \(P\) and \(Q\) the \enquote{threads} in a process \(P \mid Q\).
\end{definition}

The labeled transition system (LTS) for this version of CCS, that we denote \(\redl{\alpha}\), can be read from \autoref{fig:ltsrules} by removing the seeds and the identifiers. %
Now, to define an identified declension of that calculus, we need to describe how each thread of a process can access its own identifier pattern to independently \enquote{pull} fresh identifiers when needed, without having to perform global look-ups.
We start by defining how a seed can be \enquote{attached} to a CCS process.

\begin{definition}[Identified process]
	Given an identifier structure \(\idst\), an \emph{identified process} is a CCS process \(P\) endowed with a seed \(\seed\) that we denote \(\seed \col P\).
\end{definition}

We assume fixed a particular an identifier structure \(\idst = (\ids, \gamma, \oplus, \spli)\), and now need to introduce how we \enquote{split} identifier patterns, to formalize when a process evolves from \eg \(\ip \col a.(P\mid Q)\) that requires only one identifier pattern to \((\ip_1, \ip_2) \col P \mid Q\), that requires two---because we want \(P\) and \(Q\) to be able to pull identifiers from respectively \(\ip_1\) and \(\ip_2\) without the need for an agreement.
To make sure that our processes are always \enquote{well-identified} (\autoref{def:well-id-proc}), \ie with a matching number of threads and identifier patterns, we introduce an helper function.

\begin{definition}[Splitter helper]
	\label{def:split-help}
	Given a process \(P\) and an identifier pattern \(\ip\), we define
	\[\spli^?(\ip, P) =
		\begin{dcases*}
			(\spli^?(\spli_1 (\ip), P_1), \spli^?(\spli_2 (\ip), P_2)) & if \(P = P_1 \mid P_2 \) \\
			\ip \col P                                                 & otherwise
		\end{dcases*}
	\]
	and write \eg \(\spli^? \ip \col a \mid b\) for the \enquote{recomposition} of the pair of identified processes \(\spli^?(\ip, a \mid b) = (\spli_1 (\ip)\col a, \spli_2 (\ip) \col b)\) into the identified process \((\spli_1 (\ip), \spli_2 (\ip)) \col a \mid b\).
\end{definition}

Note that in the definition below, only the rules act., \(+\) and \(\sqcap\) can \enquote{uncover} threads, and hence are the only place where \(\spli^?\) is invoked. %

\begin{definition}[ILTS]
	\label{def:ilts}
	We %
	let the \emph{identified labeled transition system} between identified processes be the union of all the relations \(\fwlts{i}{\alpha}\) for \(i \in \ids\) and \(\alpha \in \labels\) of \autoref{fig:ltsrules}.
	Structural relation is as usual and presented in \autoref{sec:struct}.

	\begin{figure}[h]
		\begin{tcolorbox}[title = Action and Restriction]
			\begin{prooftree}
				\hypo{}
				\infer[]1[act.]{(c,s)\col \lambda. P \fwlts{\gamma(c)}{\lambda} \spli^?(c + s, s)\col P}
			\end{prooftree}
			\hfill
			\begin{prooftree}
				\hypo{\seed\col P \fwlts{i}{\alpha} \seed' \col P '}
				\infer[left label={\(a \notin \{\alpha, \bar{\alpha}\}\)}]1[res.]{\seed\col P \bs a \fwlts{i}{\alpha} \seed' \col P ' \bs a}
			\end{prooftree}
		\end{tcolorbox}

		\begin{tcolorbox}[title=Parallel Group]
			\begin{prooftree}
				\hypo{\seed_1 \col P \fwlts{i_1}{\lambda} \seed_1' \col P'}
				\hypo{\seed_2 \col Q \fwlts{i_2}{\out{\lambda}} \seed_2' \col Q'}
				\infer[left label={\(\seed_1 \compat \seed_2\)}]
				2[syn.]{(\seed_1, \seed_2)\col P \mid Q \fwlts{i_1 \oplus i_2}{\tau} (\seed_1', \seed_2')\col P' \mid Q'}
			\end{prooftree}
			\\[1.8em]
			\begin{prooftree}
				\hypo{\seed_1 \col P \fwlts{i}{\alpha} \seed_1' \col P'}
				\infer[left label={\(\seed_1 \compat \seed_2\)}]
				1[\(\mid_{\Left}\)]{(\seed_1, \seed_2)\col P \mid Q \fwlts{i}{\alpha} (\seed_1', \seed_2) \col P' \mid Q}
			\end{prooftree}
			\hfill
			\begin{prooftree}
				\hypo{\seed_2 \col Q \fwlts{i}{\alpha} \seed_2' \col Q'}
				\infer[left label={\(\seed_1 \compat \seed_2\)}]
				1[\(\mid_{\Right}\)]{(\seed_1, \seed_2)\col P \mid Q \fwlts{i}{\alpha} (\ip_1, \ip_2') \col P \mid Q'}
			\end{prooftree}
		\end{tcolorbox}
		\begin{tcolorbox}[title=Sum Group]
			\begin{prooftree}
				\hypo{\seed\col P \fwlts{i}{\alpha} \seed' \col P '}
				\infer1[\(\ovee_{\Left}\)]{\seed\col P \ovee Q \fwlts{i}{\alpha} \seed'\col P '}
			\end{prooftree}
			\hfill
			\begin{prooftree}
				\hypo{}
				\infer1[\(+_{\Left}\)]{(c, s)\col (\lambda_1 . P_1) + (\lambda_2 . P_2) \fwlts{\gamma(c)}{\lambda_1} \spli^? (c + s, s) \col P_1}
			\end{prooftree}
			\\[1.8em]
			\begin{prooftree}
				\hypo{\seed\col Q \fwlts{i}{\alpha} \seed'\col Q'}
				\infer1[\(\ovee_{\Right}\)]{\seed\col P \ovee Q \fwlts{i}{\alpha} \seed'\col Q'}
			\end{prooftree}
			\hfill
			\begin{prooftree}
				\hypo{}
				\infer1[\(+_{\Right}\)]{(c, s)\col (\lambda_1 . P_1) + (\lambda_2 . P_2) \fwlts{\gamma(c)}{\lambda_2} \spli^? (c + s, s) \col P_2}
			\end{prooftree}
			\\[1.8em]
			\begin{prooftree}
				\hypo{}
				\infer1[\(\sqcap_{\Left}\)]{(c, s)\col P \sqcap Q \fwlts{\gamma(c)}{\upsilon} \spli^? (c + s, s) \col P }
			\end{prooftree}
			\hfill
			\begin{prooftree}
				\hypo{}
				\infer1[\(\sqcap_{\Right}\)]{(c, s)\col P \sqcap Q \fwlts{\gamma(c)}{\upsilon} \spli^? (c + s, s) \col Q}
			\end{prooftree}
		\end{tcolorbox}
		\caption{Rules of the identified labeled transition system (ILTS)}
		\label{fig:ltsrules}
	\end{figure}
\end{definition}

\begin{example}
	The result of \(\spli^?(0, 1) \col (a \mid (b \mid (c+d)))\) is \(((0, 2), ((1, 4), (3, 4))) \col (a \mid (b \mid (c+d))\), and \(a\) (\resp \(b\), \(c+d\)) would get its next transition identified with \(0\) (\resp \(1\), \(3\)).
\end{example}

\begin{definition}[Well-identified process]
	\label{def:well-id-proc}
	An identified process \(\seed \col P\) is \emph{well-identified} iff \(\seed = (\seed_1, \seed_2)\), \(P = P_1 \mid P_2\) and \(\seed_1 \col P_1\) and \(\seed_2 \col P_2\) are both well-identified, or \(P\) is not of the form \(P_1 \mid P_2\) and \(\seed\) is an identifier pattern.
\end{definition}

From now on, we will always assume that identified processes are well-identified.

\begin{definition}[Traces]
	In a transition \(t:\seed\circ P\fwlts{i}{\alpha} \seed'\circ P'\), process \(\seed\circ P\) is the \emph{source}, and \(\seed'\circ P'\) is the \emph{target} of transition \(t\).
	Two transitions are \emph{coinitial} (\resp \emph{cofinal}) if they have the same source (\resp target).
	Transitions \(t_1\) and \(t_2\) are \emph{composable}, \(t_1;t_2\), if the target of \(t_1\) is the source of \(t_2\).
	A sequence of pairwise composable transitions is called a \emph{trace}, written \(t_1; \cdots; t_n\).
\end{definition}

\begin{restatable}[Unicity]{lemma}{lemunicity}\label{lem:unicity}
	The trace of an identified process %
	contains any identifier at most once, and if a transition has identifier \(i_1 \oplus i_2 \in \ids_p\), then neither \(i_1\) nor \(i_2\) occur in the trace.
\end{restatable}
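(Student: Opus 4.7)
The plan is to proceed by induction on the length of the trace, strengthening the statement to the following invariant: for every identified process $\seed \col P$ reachable by a trace $T$ from a well-identified initial process, the identifier patterns occurring in $\seed$ are pairwise compatible (so $\seed$ remains a valid seed), and every identifier appearing in $T$ neither lies in $\idst(\ip)$ for any pattern $\ip$ of $\seed$, nor can be written as $i_1 \oplus i_2$ with $i_1$ or $i_2$ belonging to such a stream.

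This invariant immediately yields the lemma. Any identifier $i$ produced by a transition $t$ is either $\gamma(c) \in \idst(c,s)$ for some pattern $(c,s)$ currently in $\seed$ (act., $+$, $\sqcap$ cases) or a pair $i_1 \oplus i_2$ whose components are such generators (syn. case). After $t$ fires, the updated seed $\seed'$ no longer has $i$ (respectively neither $i_1$ nor $i_2$) in any of its patterns' streams, so applying the invariant to the extended trace prevents any later transition from reusing $i$ alone, and by bijectivity of $\oplus$ also prevents $i_1$ or $i_2$ from reappearing as a component of any future paired identifier.

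The base case is immediate: the trace is empty and pairwise compatibility of patterns in the initial seed is built into the definitions of \emph{seed} and \emph{well-identified process} (\autoref{def:seed} and \autoref{def:well-id-proc}). For the inductive step, I analyze the last transition $\seed \col P \fwlts{i}{\alpha} \seed' \col P'$ by cases on the rule of \autoref{fig:ltsrules}. The rules res., $\mid_{\Left}$, $\mid_{\Right}$, $\ovee_{\Left}$, and $\ovee_{\Right}$ propagate the invariant through unchanged sub-seeds. For act., $+_{\Left}$, $+_{\Right}$, $\sqcap_{\Left}$, $\sqcap_{\Right}$ firing on pattern $(c,s)$, the identifier $\gamma(c)$ is fresh by induction, and the updated pattern $(c+s, s)$ generates the sub-stream $\gamma(c+s), \gamma(c+2s), \ldots$, which excludes $\gamma(c)$ by bijectivity of $\gamma$ together with $s > 0$. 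For syn., the identifier $i_1 \oplus i_2 \in \ids_p$ cannot have previously occurred in any pattern's stream (which lives in $\ids_a$), and its two atomic components are handled as in the previous case on each side of the compatible parallel composition.

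The main obstacle lies in the interaction with the splitter helper $\spli^?$: after the atomic step on $(c,s)$, the pattern $(c+s, s)$ may be further split to match the thread structure of $P'$, and the resulting patterns' streams must still avoid $\gamma(c)$ and every prior identifier. The canonical splitter of \autoref{ex:compatible} satisfies $\idst(\spli_j(\ip)) \subseteq \idst(\ip)$, which suffices; in the abstract framework of \autoref{def:split}, this monotonicity needs to be required of the splitter in order for the invariant to be preserved. Once granted, the induction goes through and both conjuncts of the lemma follow directly from the invariant at the end of any trace.
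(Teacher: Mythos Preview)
Your argument is essentially the paper's own, carried out in detail: the paper's proof is a two-line sketch saying that splitting and incrementing preserve compatibility, and you unpack this into an explicit invariant maintained by induction on the trace. The invariant you chose (past identifiers avoid all current streams, both directly and as components of paired identifiers) is the right one, and the case analysis is sound.

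Your observation about the splitter is a genuine catch that the paper elides. \autoref{def:split} only requires that the two patterns produced by $\spli(\ip)$ be compatible \emph{with each other}; nothing forces $\idst(\spli_j(\ip)) \subseteq \idst(\ip)$. Without this monotonicity, both your invariant and the paper's informal argument fail: a splitter that, say, always returns $((0,2),(1,2))$ regardless of its input would satisfy the definition yet allow identifiers to recur. The same gap also undermines the well-definedness proof of $[\spli]$ in \autoref{sec:app-iden}, which silently assumes that splitting two compatible patterns yields four pairwise-compatible patterns. The concrete splitter of \autoref{ex:compatible} does have the monotonicity property, so the results hold for $\iZ$; but you are right that the abstract framework as written needs this extra hypothesis on $\spli$.
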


\begin{restatable}{lemma}{lmccsidentified}\label{lm:ccsidentified}
	For all CCS process \(P\), \(\exists \seed\) \st \(P \redl{\alpha_1} \cdots \redl{\alpha_n} P' \Leftrightarrow (%
	\seed \col P \fwlts{i_1}{\alpha_1} \cdots \fwlts{i_n}{\alpha_n} \seed' \col P')\).
\end{restatable}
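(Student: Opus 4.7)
The plan is to observe that each rule of the ILTS in Figure~\ref{fig:ltsrules} is obtained from a corresponding CCS rule by decorating it with identifier bookkeeping: forgetting the seed annotations and the identifiers on arrows recovers the plain CCS rules. For the existential, given $P$ one picks any initial pattern, say $\ip_0 = (0,1)$, and sets $\seed = \spli^?(\ip_0, P)$; by the recursive definition of $\spli^?$ (Definition~\ref{def:split-help}) together with the compatibility properties of $\spli$, the resulting $\seed \col P$ is well-identified.

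Both directions are then proved by induction on the trace length $n$, with the base case $n=0$ immediate, and the inductive step reducing to a single-transition correspondence proved by induction on the derivation tree. For the ($\Leftarrow$) direction, given an ILTS transition $\seed \col P \fwlts{i}{\alpha} \seed' \col P'$, one erases all seed and identifier annotations; each ILTS rule then collapses to its CCS counterpart, yielding $P \redl{\alpha} P'$. For the ($\Rightarrow$) direction, assuming $P \redl{\alpha} P'$ and $\seed \col P$ well-identified, one constructs $i$ and $\seed'$ by induction on the CCS derivation, matching each CCS rule with its ILTS counterpart; the novel side-conditions $\seed_1 \compat \seed_2$ appearing in the rules syn.\ and $\mid_{\Left,\Right}$ are discharged directly from the well-identification hypothesis on $\seed$.

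The main obstacle is maintaining well-identification across the whole trace so that the induction can step. The rules act., $+_{\Left,\Right}$, and $\sqcap_{\Left,\Right}$ invoke $\spli^?$ precisely when the continuation may uncover new parallel threads, so that the new seed matches the structure required by Definition~\ref{def:well-id-proc}; the rules syn.\ and $\mid_{\Left,\Right}$ preserve the pair structure while recursively updating the sub-seeds. A short auxiliary claim---\emph{if $\seed \col P$ is well-identified and $\seed \col P \fwlts{i}{\alpha} \seed' \col P'$, then $\seed' \col P'$ is well-identified}---closes the gap, and follows by direct inspection of the ILTS rules together with the definition of $\spli^?$.
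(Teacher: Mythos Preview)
Your proposal is correct and follows the same overall shape as the paper's proof---induction on the derivation tree for the single-step correspondence, with the $(\Leftarrow)$ direction handled by erasing the seed annotations---but the technical handling of the compatibility side-conditions differs.

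You fix the seed up front as $\spli^?((0,1),P)$ and then carry the invariant ``$\seed \col P$ is well-identified (and $\seed$ is a valid seed)'' through the derivation and along the trace; since by Definition~\ref{def:seed} every seed already has pairwise compatible identifier patterns, the conditions $\seed_1 \compat \seed_2$ in the parallel group are automatically met, and your auxiliary preservation claim (which is exactly the paper's Lemma~\ref{lem:well-ident}) lets the trace induction step. The paper instead does not maintain well-identification as an explicit invariant: it argues that if, during the induction on the derivation, the compatibility condition happens to fail, one may invoke the substitution lemma (Lemma~\ref{lem:substitution}) to replace the offending seeds by $[\spli_1](\seed_1)$ and $[\spli_2](\seed_1)$, which are compatible by construction, and propagate this change. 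Your route is arguably cleaner, as it isolates a reusable invariant and makes the trace-level induction explicit; the paper's route is shorter on the page but relies on an extra repair lemma. Both arrive at the same conclusion with no real gap.
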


\begin{definition}[Concurrency and compatible identifiers]
	\label{def:concurrencyforward}
	\label{def:compatible-ident}
	Two coinitial transitions \(\seed \col P \fwlts{i_1}{\alpha_1} \seed_1 \col P_1\) and \(\seed \col P \fwlts{i_2}{\alpha_2} \seed_2 \col P_2\) are \emph{concurrent} iff \(i_1\) and \(i_2\) are \emph{compatible}, \(i_1 \compat i_2\), \ie iff
	\[\begin{dcases}
			i_1 \neq i_2                                                                                                                                                                    & \text{if \(i_1\), \(i_2 \in \ids_a\)}     \\
			\text{there is no } i \in \ids_a \text{ \st } i_1 \oplus i = i_2                                                                                                                & \text{if } i_1 \in \ids_a, i_2 \in \ids_p \\
			\text{there is no } i \in \ids_a \text{ \st } i \oplus i_2 = i_1                                                                                                                & \text{if } i_1 \in \ids_p, i_2 \in \ids_a \\
			\text{for } i_1^1, i_1^2, i_2^1 \text{ and } i_2^2 \text{ \st } i_1 = i_1^1 \oplus i_1^2 \text{ and } i_2 = i_2^1 \oplus i_2^2, i_1^j \neq i_2^k \text{ for } j, k \in \{1, 2\} & \text{if } i_1, i_2 \in \ids_p
		\end{dcases}
	\]
\end{definition}

\begin{example}
	\label{example-trans}
	The identified process \(\seed \col P = ((0, 2), (1, 2)) \col a + b \mid \out{a}.c\) has four possible transitions:
	\begin{align*}
		t_1 : \seed\col P\fwlts{0}{a} ((2, 2), (1, 2)) \col 0 \mid \out{a}.c &  & t_3 : \seed\col P\fwlts{1}{\out{a}} ((0, 2), (3, 2)) \col a + b \mid c  \\
		t_2 : \seed\col P\fwlts{0}{b} ((2, 2), (1, 2)) \col 0 \mid \out{a}.c &  & t_4 : \seed\col P\fwlts{0\oplus 1}{\tau} ((2, 2), (3, 2)) \col 0 \mid c
	\end{align*}
	Among them, only \(t_1\) and \(t_3\), and \(t_2\) and \(t_3\) are concurrent: transitions are concurrent when they do not use overlapping identifiers, not even as part of synchronizations.
\end{example}

Hence, concurrency becomes an \enquote{easily observable} feature that does not require inspection of the term, of its future transitions---as for \enquote{the diamond property}~\cite{Levy1978}---or of an intermediate relation on proof terms~\cite[p.~415]{Boudol1988}.
We believe this contribution to be of independent interest, and it will help significantly the precision and efficiency of our forward-and-backward calculus in multiple respect.

\section{Reversible and Identified CCS}

A reversible calculus is always defined by a forward calculus and a backward calculus.
Here, we define the forward part as an extension of the identified calculus of \autoref{def:ilts}, without copying the information about the seeds for conciseness, but using the identifiers they provide freely.
The backward calculus will require to make the seed explicit again, and we made the choice of having backward transitions re-use the identifier from their corresponding forward transition, and to restore the seed in its previous state.
Expected properties are detailed in \autoref{sec:rever-prop}.

\subsection{Defining the Identified Reversible CCS}
\label{sec:rever-def}

\begin{definition}[Memories and reversible processes]
	\label{def:memory}
	Let \(o \in \{\ovee, +, \sqcap\}\), \(d \in \{\Left, \Right\}\), we define \emph{memory events}, \emph{memories} and \emph{identified reversible processes} as follows, for \(n \geqslant 0\):
	\begin{align}
		e \coloneqq         & \mem{i, \mu, ((o_1, P_1, d_1), \hdots (o_n, P_n, d_n))} \tag{Memory event} \\
		\memstack \coloneqq & e . \memstack \BNFsepa \emptymem \tag{\emph{Memory stack}}                 \\
		\mempair \coloneqq  & [m, m]                                                                     %
		\tag{\emph{Memory pair}}                                                                         \\
		m \coloneqq         & \memstack \BNFsepa \mempair \tag{Memory}                                   \\
		R,S \coloneqq       & \seed \col m \rhd P \tag{Identified reversible processes}
	\end{align}

	In a memory event, if \(n = 0\), then we will simply write \(\Null\).
	We generally do not write the trailing empty memories in memory stacks, \eg we will write \(e\) instead of \(e.\emptymem\).
\end{definition}

Stated differently, our memory are represented as a stack or tuples of stacks, on which we define the following two operations. %

\begin{definition}[Operations on memories]
	\label{def:op-on-mem}
	The \emph{identifier substitution} in a memory event is written \(e[i \shortleftarrow j]\) and is defined as substitutions usually are.
	The \emph{identified insertion} is defined by
		{\small
			\[\mem{i, \mu, ((o_1, P_1, d_1), \hdots (o_n, P_n, d_n))} \concat_j (o, P, d) = \begin{dcases*}
					\mem{i, \mu, ((o_1, P_1, d_1), \hdots (o_n, P_n, d_n), (o, P, d))} & if \(i = j\) \\
					\mem{i, \mu, ((o_1, P_1, d_1), \hdots (o_n, P_n, d_n))}            & otherwise
				\end{dcases*}
			\]
		}
	The operations are easily extended to memories by simply propagating them to all memory events.
\end{definition}

When defining the forward LTS below, we omit the identifier patterns to help with readability, but the reader should assume that those rules are \enquote{on top} of the rules in \autoref{fig:ltsrules}.
The rules for the backward LTS, in \autoref{fig:birltsrules}, includes both the seeds and memories, and is the exact symmetric of the forward identified LTS with memory, up to the condition in the parallel group that we discuss later.
A bit similarly to the splitter helper (\autoref{def:split-help}), we need an operation that duplicates a memory if needed, that we define on processes with memory but without seeds for clarity.

\begin{definition}[Memory duplication]
	\label{def:dup}
	Given a process \(P\) and a memory \(m\), we define
	\[
		\dup (m, P) =
		\begin{dcases*}
			(\dup(m, P_1), \dup(m, P_2)) & if \(P = P_1 \mid P_2 \) \\
			m \rhd P                     & otherwise
		\end{dcases*}
	\]
	and write \eg \(\dup (m) \rhd a \mid b\) for the \enquote{recomposition} of the pair of identified processes \(\dup(m, a \mid b) = (\dup (m, a), \dup (m, b)) = (m \rhd a, m \rhd b)\) into the process \([m, m] \rhd a \mid b\).
\end{definition}

\begin{definition}[IRLTS]
	We let the \emph{identified reversible labeled transition system} between identified reversible processes be the union of all the relations \(\fwlts{i}{\alpha}\) and \(\bwlts{i}{\alpha}\) for \(i \in \ids\) and \(\alpha \in \labels\) of Figures~\ref{fig:irltsrules} and \ref{fig:birltsrules}, and let \(\twoheadrightarrow=\rightarrow \cup \rightsquigarrow\).
	Structural relation can be defined as usual and is presented in \autoref{sec:struct}.
	\begin{figure}[h]
		\begin{tcolorbox}[title=Action and Restriction]
			\begin{prooftree}
				\hypo{}
				\infer[]1[act.]{m \rhd \lambda . P \fwlts{i}{\lambda} \dup(\mem{i, \lambda, \Null}.m) \rhd P}
			\end{prooftree}
			\hfill
			\begin{prooftree}
				\hypo{m \rhd P \fwlts{i}{\alpha} m' \rhd P '}
				\infer[left label={\(a \notin \{\alpha, \bar{\alpha}\}\)}]1[res.]{m \rhd P \bs a \fwlts{i}{\alpha} m'\rhd P ' \bs a}
			\end{prooftree}
		\end{tcolorbox}

		\begin{tcolorbox}[title=Parallel Group]
			\begin{prooftree}
				\hypo{m_1 \rhd P \fwlts{i_1}{\lambda} m_1' \rhd P'}
				\hypo{m_2 \rhd Q \fwlts{i_2}{\out{\lambda}} m_2' \rhd Q'}
				\infer%
				2[syn.]{[m_1, m_2]\rhd P \mid Q \fwlts{i_1 \oplus i_2}{\tau} [m_1'[i_1 \shortleftarrow i_1 \oplus i_2], m_2'[i_2 \shortleftarrow i_2 \oplus i_1]]\rhd P' \mid Q'}
			\end{prooftree}
			\\[1.8em]
			\begin{prooftree}
				\hypo{m_1 \rhd P \fwlts{i}{\alpha} m_1' \rhd P'}
				\infer%
				1[\(\mid_{\Left}\)]{[m_1, m_2]\rhd P \mid Q \fwlts{i}{\alpha} [m_1', m_2]\rhd P' \mid Q}
			\end{prooftree}
			\hfill
			\begin{prooftree}
				\hypo{m_2 \rhd Q \fwlts{i}{\alpha} m_2' \rhd Q'}
				\infer%
				1[\(\mid_{\Right}\)]{[m_1, m_2]\rhd P \mid Q \fwlts{i}{\alpha} [m_1, m_2']\rhd P \mid Q'}
			\end{prooftree}
		\end{tcolorbox}
		\begin{tcolorbox}[title=Sum Group]
			\scalebox{0.8}{
				\begin{prooftree}
					\hypo{m \rhd P \fwlts{i}{\alpha} m' \rhd P '}\textsl{}
					\infer1[\(\ovee_{\Left}\)]{m\rhd(P \ovee Q) \fwlts{i}{\alpha} m'\concat_i(\ovee, Q, \Right)\col P '}
				\end{prooftree}
			}
			\hfill
			\scalebox{0.8}{
				\begin{prooftree}
					\hypo{}
					\infer1[\(+_{\Left}\)]{m \rhd ((\lambda_1 . P_1) + (\lambda_2 . P_2)) \fwlts{i}{\lambda_1} \dup(\mem{i, \lambda_1, (+, \lambda_2.P_2, \Right)} . m) \rhd P_1}
				\end{prooftree}
			}
			\\[1.8em]
			\scalebox{0.8}{
				\begin{prooftree}
					\hypo{m \rhd Q \fwlts{i}{\alpha} m' \rhd Q'}
					\infer1[\(\ovee_{\Right}\)]{m \rhd (P \ovee Q) \fwlts{i}{\alpha} m'\concat_i(\ovee, P, \Left) \rhd Q'}
				\end{prooftree}
			}
			\hfill
			\scalebox{0.8}{
				\begin{prooftree}
					\hypo{}
					\infer1[\(+_{\Right}\)]{m \rhd ((\lambda_1 . P_1) + (\lambda_2 . P_2)) \fwlts{i}{\lambda_1} \dup(\mem{i, \lambda_2, (+, \lambda_1.P_1, \Left)} . m) \rhd P_2}
				\end{prooftree}
			}
			\\[1.8em]
			\scalebox{0.9}{
				\begin{prooftree}
					\hypo{}
					\infer1[\(\sqcap_{\Left}\)]{m \rhd(P \sqcap Q) \fwlts{i}{\upsilon} \dup(\mem{i, \upsilon, (\sqcap, Q, \Right)} . m) \rhd P }
				\end{prooftree}
			}
			\hfill
			\scalebox{0.9}{
				\begin{prooftree}
					\hypo{}
					\infer1[\(\sqcap_{\Right}\)]{m \rhd(P \sqcap Q) \fwlts{i}{\upsilon} \dup(\mem{i, \upsilon, (\sqcap, P, \Left)} . m) \rhd Q}
				\end{prooftree}
			}
		\end{tcolorbox}
		\caption{Forward rules of the identified reversible labeled transition system (IRLTS)}
		\label{fig:irltsrules}
	\end{figure}
\end{definition}

In its first version, RCCS was using the whole memory as an identifier~\cite{Danos2004}, but then it moved to use specific identifiers~\cite{Aubert2015d,Medic2016}, closer in inspiration to CCSK's keys~\cite{Phillips2006}.
This strategy, however, forces the act. rules (forward and backward) to check that the identifier picked (or present in the memory event that is being reversed) is not occurring in the memory, while our system can simply pick identifiers from the seed without having to inspect the memory, and can go backward simply by looking if the memory event has identifier in \(\ids_a\)---something enforced by requiring the identifier to be of the form \(\gamma^{-1}(c)\).
Furthermore, memory events and annotated prefixes, as used in RCCS and CCSK, do not carry information on whenever they synchronized with other threads: retrieving this information require to inspect all the memories, or keys, of all the other threads, while our system simply observes if the identifier is in \(\ids_p\), hence enforcing a \enquote{locality} property.
However, when backtracking, the memories of the threads need to be checked for \enquote{compatibility}, otherwise \ie \(((1, 2), (2, 2)) \col [\mem{0, a, \Null}, \mem{0, a, \Null}] \rhd P \mid Q\) could backtrack to \(((1, 2), (0, 2)) \col [\mem{0, a, \Null}, \emptymem] \rhd P \mid a.Q\) and then be stuck instead of \((0, 1) \col \emptymem \rhd a.(P \mid Q)\).

\begin{figure}[ht!]
	\begin{tcolorbox}[title=Action and Restriction]
		\begin{prooftree}
			\hypo{}
			\infer%
			1[act.]{\spli^?(\gamma^{-1}(i) + s, s) \col \dup(\mem{i, \lambda, \Null}.m) \rhd P \bwlts{i}{\lambda} (\gamma^{-1}(i), s) \col m \rhd \lambda . P }
		\end{prooftree}
		\\[1.8em]
		\begin{prooftree}
			\hypo{\seed \col m \rhd P \bwlts{i}{\alpha} \seed' \col m' \rhd P '}
			\infer[left label={\(a \notin \{\alpha, \bar{\alpha}\}\)}]1[res.]{\seed \col m \rhd P \bs a \bwlts{i}{\alpha} \seed' \col m'\rhd P' \bs a}
		\end{prooftree}
	\end{tcolorbox}
	\begin{tcolorbox}[title=Parallel Group]
		\begin{flushleft}
			The rule syn. (\resp \(\mid_{\Left}\)) can be applied only if \(\seed_1 \compat \seed_2\) and \(i_1 \notin m_2'\), \(i_2 \notin m_1'\) (\resp \(i \notin m_2\)).
		\end{flushleft}
		\begin{prooftree}
			\hypo{\seed_1 \col m_1 [i_1 \oplus i_2 \shortleftarrow i_1] \rhd P \bwlts{i_1}{\lambda} \seed_1' \col m_1' \rhd P'}
			\hypo{\seed_2 \col m_2 [i_2 \oplus i_1 \shortleftarrow i_2] \rhd Q \bwlts{i_2}{\out{\lambda}} \seed_2' \col m_2' \rhd Q'}
			\infer%
			2[syn.]{(\seed_1, \seed_2) \col [m_1, m_2]\rhd P \mid Q \bwlts{i_1 \oplus i_2}{\tau} (\seed_1', \seed_2') \col [m_1', m_2']\rhd P' \mid Q'}
		\end{prooftree}
		\\[1.8em]
		\begin{prooftree}
			\hypo{\seed_1 \col m_1 \rhd P \bwlts{i}{\alpha} \seed_1' \col m_1' \rhd P'}
			\infer%
			1[\(\mid_{\Left}\)]{(\seed_1, \seed_2) \col [m_1, m_2]\rhd P \mid Q \bwlts{i}{\alpha} (\seed_1', \seed_2) \col [m_1', m_2]\rhd P' \mid Q}
		\end{prooftree}
	\end{tcolorbox}
	\begin{tcolorbox}[title=Sum Group]
		\begin{prooftree}
			\hypo{\seed \col m \rhd P \bwlts{i}{\alpha} \seed' \col m' \rhd P '}
			\infer1[\(\ovee_{\Left}\)]{\seed \col m\concat_i(\ovee, Q, \Right)\rhd P \bwlts{i}{\alpha} \seed' \col m' \rhd (P' \ovee Q)}
		\end{prooftree}
		\\[1.8em]
		\begin{prooftree}
			\hypo{}
			\infer%
			1[\(+_{\Left}\)]{\spli^?(\gamma^{-1}(i) + s, s) \col \dup(\mem{i, \lambda_1, (+, \lambda_2.P_2, \Right)} . m) \rhd P_1 \bwlts{i}{\lambda_1} (\gamma^{-1}(i), s) \col m \rhd ((\lambda_1 . P_1) + (\lambda_2 . P_2)) }
		\end{prooftree}
		\\[1.8em]
		\begin{prooftree}
			\hypo{}
			\infer1[\(\sqcap_{\Left}\)]{\spli^?(\gamma^{-1}(i) + s, s) \col \dup(\mem{i, \upsilon, (\sqcap, Q, \Right)} . m) \rhd P \bwlts{i}{\upsilon} (\gamma^{-1}(i), s) \col m \rhd(P \sqcap Q) }
		\end{prooftree}
	\end{tcolorbox}
	The rules \(\mid_{\Right}\), \(\ovee_{\Right}\), \(+_{\Right}\) and \(\sqcap_{\Right}\) can easily be inferred.
	\caption{Backward rules of the identified reversible labeled transition system (IRLTS)}
	\label{fig:birltsrules}
\end{figure}

\subsection{Properties: From Concurrency to Causal Consistency and Unicity}
\label{sec:rever-prop}

We now prove that our calculus satisfies typical properties for reversible process calculi~\cite{Cristescu2013,Danos2004,Lanese2013,Phillips2006}.
Notice that showing that the forward-only part of our calculus is a conservative extension of CCS is done by extending~\autoref{lm:ccsidentified} to accommodate memories and it is immediate.
We give a notion of concurrency, and prove that our calculus enjoys the required axioms to obtain causal consistency \enquote{for free}~\cite{Lanese2020}.
All our properties, as commonly done, are limited to the reachable processes.

\begin{definition}[Initial, reachable and origin process]
	\label{def:initial}
	A process \(\seed \col m \rhd P\) is \emph{initial} if \(\seed \col P\) is well-identified and if \(m = \emptymem\) if \(P\) is not of the form \(P_1 \mid P_2\), or if \(m = [m_1, m_2]\), \(P = P_1 \mid P_2\) and \([\spli_j](\seed) \col m_j \rhd P_j\) for \(j \in \{1, 2\}\) are initial.
	A process \(R\) is \emph{reachable} if it can be derived from an initial process, its \emph{origin}, written \(\orig{R}\), by applying the rules in Figures~\ref{fig:irltsrules} and~\ref{fig:birltsrules}.
\end{definition}

\subsubsection{Concurrency}

To define concurrency in the forward \emph{and backward} identified LTS is easy when both transitions have the same direction: forward transitions will adopt the definition of the identified calculus, and backward transitions will always be concurrent.
More care is required when transitions have opposite directions, but the seed provides a good mechanism to define concurrency easily.
In a nutshell, the forward transition will be in conflict with the backward transition when the forward identifier was obtained using the identifier pattern(s) that have been used to generate the backward identifier, something we call \enquote{being downstream}.
Identifying the identifier pattern(s) that have been used to generate an identifier in the memory is actually immediate:

\begin{definition}
	Given a backward transition \(t: \seed \col m \rhd P \bwlts{i}{\alpha} \seed' \col m' \rhd P'\), we write \(\ip_t\) (\resp \(\ip_t^1\), \(\ip_t^2\)) for the unique identifier pattern(s) in \(\seed'\) such that \(i \in \ids_a\) (\resp \(i_1\) and \(i_2\) \st \(i_1 \oplus i_2 = i \in \ids_p\)) is the first identifier in the stream generated by \(\ip_t\) (\resp are the first identifiers in the streams generated by \(\ip_t^1\) and \(\ip_t^2\)).
\end{definition}

\begin{definition}[Downstream]
	An identifier \(i\) is \emph{downstream} of an identifier pattern \((c, s)\) if
	\[\begin{dcases}
			i \in \idst(c,s)                                                                                                               & \text{if } i \in \ids_a \\
			\text{there exists } j, k \in \ids_a \text{ \st } j \oplus k = i \text{ and } j \text{ or } k \text{ is downstream of } (c, s) & \text{if } i \in \ids_p
		\end{dcases}
	\]
\end{definition}

\begin{definition}[Concurrency]\label{def:concurrency}
	Two different coinitial transitions \(t_1: \seed\col m\rhd P\fbwlts{i_1}{\alpha_1} \seed_1\col m_1\rhd P_1\) and \(t_2:\seed\col m\rhd P \fbwlts{i_2}{\alpha_2} \seed_2\col m_2\rhd P_2\) are \emph{concurrent} %
	iff
	\begin{itemize}
		\item \(t_1\) and \(t_2\) are forward transitions and \(i_1 \compat i_2\) (\autoref{def:concurrencyforward});
		\item \(t_1\) is a forward and \(t_2\) is a backward transition and \(i_1\) (or \(i_1^1\) and \(i_1^2\) if \(i_1= i_1^1 \oplus i_1^2\)) is not downstream of \(\ip_{t_2}\) (or \(\ip_{t_2}^1\) nor \(\ip_{t_2}^2\));
		\item \(t_1\) and \(t_2\) are backward transitions.
	\end{itemize}
\end{definition}

\begin{example}
	Re-using the process from \autoref{example-trans} and adding the memories, after having performed \(t_1\) and \(t_3\), we obtain the process \( \seed\circ [m_1,m_2] \rhd 0 \mid c\), where \(\seed=((2,2),(3,2))\), \(m_1=\mem{0, a, (+, b, \Right)}\) and \(m_2= \mem{1, \out{a}, \Null}\), that has three possible transitions:
	\begin{align*}
		 & t_1 : \seed\circ [m_1, m_2] \rhd 0 \mid c\fwlts{3}{c} ((2,2),(5,2))\circ [m_1, \mem{3, c, \Null} . m_2] \rhd 0 \mid 0 \\
		 & t_2 : \seed\circ [m_1, m_2] \rhd 0 \mid c\bwlts{1}{\out{a}} ((2,2),(1,2))\circ[m_1, \emptymem] \rhd 0 \mid \out{a}.c  \\
		 & t_3 : \seed\circ [m_1, m_2] \rhd 0 \mid c\bwlts{0}{a} ((0,2),(3,2))\circ[\emptymem, m_2] \rhd a + b \mid c
	\end{align*}
	Among them, \(t_2\) and \(t_3\) are concurrent, as they are both backward, as well as \(t_1\) and \(t_3\), as \(3\) was not generated by \(\ip_{t_3} = (0, 2)\).
	However, as \(3\) is downstream of \(\ip_{t_2} = (1, 2)\), \(t_1\) and \(t_2\) are \emph{not} concurrent.
\end{example}

\subsubsection{Causal Consistency}
We now prove that our framework enjoys causal consistency, a property stating that an action can be reversed only provided all its consequences have been undone.
Causal consistency holds for a calculus which satisfies four basic axioms~\cite{Lanese2020}: \emph{Loop Lemma}---\enquote{any reduction can be undone}---, \emph{Square Property}---\enquote{concurrent transitions can be executed in any order}---, \emph{Concurrency (independence) of the backward transitions}---\enquote{coinitial backward transitions are concurrent}--- and \emph{Well-foundedness}---\enquote{each process has a finite past}.
Additionally, it is assumed that the semantics is equipped with the independence relation, in our case concurrency relation.

\begin{restatable}[Axioms]{lemma}{lmaxioms}\label{lm:axioms}
	For every reachable processes \(R\), \(R'\), IRLTS satisfies the following axioms:
	\begin{description}
		\item [Loop Lemma:] for every forward transition \(t :R\fwlts{i}{\alpha}R'\) there exists a backward transition \(t\rev :R'\bwlts{i}{\alpha}R\) and vice versa.
		\item [Square Property:] if \(t_1 :R\fbwlts{i_1}{\alpha_1}R_1\) and \(t_2:R\fbwlts{i_2}{\alpha_2}R_2\) are two coinitial concurrent transitions, there exist two cofinal transitions \(t'_2 :R_1\fbwlts{i_2}{\alpha_2}R_3\) and \(t'_1 :R_2\fbwlts{i_1}{\alpha_1}R_3\).
		\item [Backward transitions are concurrent:]
		      any two coinitial backward transitions \(t_1: R\bwlts{i_1}{\alpha_1}R_1\) and \(t_2:R\bwlts{i_2}{\alpha_2}R_2\) where \(t_1\neq t_2\) are concurrent.
		\item [Well-foundedness:] there is no infinite backward computation.
	\end{description}
\end{restatable}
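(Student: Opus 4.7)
I will prove the four axioms independently, in the order they are stated, since each uses different techniques. The last two are essentially immediate and the first is structural; the main work lies in the Square Property.

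\textbf{Loop Lemma.} The plan is induction on the derivation of the transition, exploiting the fact that the rules of Figures~\ref{fig:irltsrules} and~\ref{fig:birltsrules} are designed as mirror images. For a forward act.\ transition $(c,s)\col m \rhd \lambda.P \fwlts{\gamma(c)}{\lambda} \spli^?(c+s,s)\col\dup(\mem{\gamma(c),\lambda,\Null}.m)\rhd P$ I verify that applying backward act.\ with $i=\gamma(c)$ recovers the source: since $\gamma^{-1}(\gamma(c))=c$, the seed is rebuilt as $(c,s)$ and the memory event at the top of $\dup(\mem{i,\lambda,\Null}.m)$ is removed, giving back $m\rhd\lambda.P$. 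The sum rules are analogous; the res., $\mid_\Left$, $\mid_\Right$ and $\ovee$-rules propagate by the inductive hypothesis. For syn., the substitutions $[i_1\shortleftarrow i_1\oplus i_2]$ in the forward rule and $[i_1\oplus i_2\shortleftarrow i_1]$ in the backward rule are mutual inverses on the memory, so the two composes to the identity. The converse direction (from a backward transition recover a forward one) is symmetric.

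\textbf{Square Property.} By simultaneous case analysis on the last rules of $t_1$ and $t_2$, using induction for the congruence rules (res., $\mid$, $\ovee$). The concurrency hypothesis is used in three ways. When both transitions are forward, $i_1\compat i_2$ guarantees the freshly pulled identifiers are disjoint, so the two rules commute. When $t_1$ is forward and $t_2$ backward, the downstream condition of Definition~\ref{def:concurrency} ensures the forward transition does not touch the identifier pattern that $t_2$ restores to the seed, nor the memory event $t_2$ removes. When both are backward, there is no interference in the memory because each removes a distinct top event from a distinct stack (formally derived in the backward-concurrency axiom below). The delicate sub-case is when both transitions are synchronizations occurring in parallel sub-trees: here I must check that the substitutions $[i_j\shortleftarrow i_j\oplus i_k]$ commute with the memory rewriting induced by the other synchronization, which holds exactly because the identifiers involved are pairwise distinct by $i_1\compat i_2$. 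I expect this to be the main obstacle: writing out all the diagrams for the syn./syn. and syn./act.\ combinations, especially with the duplications introduced by $\dup$ when threads sit on opposite sides of a newly-unguarded parallel composition.

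\textbf{Backward transitions are concurrent.} This is immediate from the third clause of Definition~\ref{def:concurrency}, which declares any two backward transitions concurrent by fiat. The only thing to notice is that $t_1\neq t_2$ is assumed so no triviality issue arises.

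\textbf{Well-foundedness.} I define a measure $\mu$ on reversible processes as the total number of memory events in the memory tree (summed over all leaves of the $[\cdot,\cdot]$-structure). By inspection of Figure~\ref{fig:birltsrules}, every backward rule strictly decreases $\mu$: act., $+$ and $\sqcap$ drop exactly one event (counted with multiplicity across the duplicated branches, which all shrink simultaneously), syn.\ drops one event on each side, and the congruence rules preserve the decrease by the inductive hypothesis. A straightforward induction on the length of the derivation from an initial process (which has $\mu=0$) shows that every reachable process has $\mu\in\nat$, so no infinite backward sequence is possible.
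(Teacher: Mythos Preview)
Your proposal is correct and aligns with the paper on three of the four items: the Loop Lemma is handled by induction on the derivation exploiting the symmetry of the forward and backward rule sets, the backward-concurrency axiom is read off directly from the third clause of \autoref{def:concurrency}, and well-foundedness is argued via a decreasing memory measure (the paper phrases this informally as ``backward computation consumes memory'').

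For the Square Property you take a genuinely different route. You plan a full simultaneous case analysis on the last rules of \(t_1\) and \(t_2\), and you correctly anticipate that the syn./syn.\ and syn./act.\ combinations will be the laborious part. The paper instead extracts a single structural fact from \autoref{def:concurrency}: two coinitial concurrent transitions necessarily act on distinct components of a top-level parallel composition, each with its own sub-seed and sub-memory. Once that is established, commutation is immediate---the seed and memory of one component are untouched by a transition in the other, so \(t_2'\) and \(t_1'\) exist trivially and are cofinal. This buys the paper a one-paragraph proof in place of the diagram-chasing you foresee; conversely, your approach is more self-contained and does not rely on the (somewhat implicit) claim that identifier compatibility forces structural disjointness. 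Both are valid, but if you carry out your plan you may find it worthwhile to prove that structural lemma first and collapse the case analysis.
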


We now define the \enquote{causal equivalence}~\cite{Danos2004} relation on traces allowing to swap concurrent transitions and to delete transitions triggered in both directions.
The causal equivalence relation is defined for the LTSI which satisfies the Square Property and re-use the notations from above.

\begin{definition}[Causal equivalence] \label{def:equivalence}
	\emph{Causal equivalence}, \(\eq\), is the least equivalence relation on traces closed under composition satisfying \(t_1;t'_2 \eq t_2;t'_1\) and \( t;t\rev\eq \epsilon\)--- \(\epsilon\) being the empty trace.
\end{definition}

Now, given the notion of causal equivalence, using an axiomatic approach~\cite{Lanese2020} and that our reversible semantics satisfies necessary axioms, we obtain that our framework satisfies causal consistency, given bellow.

\begin{theorem}[Causal consistency]
	\label{thm:causal}
	In IRLTS, two traces are coinitial and cofinal iff they are causally equivalent.
\end{theorem}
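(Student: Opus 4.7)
The plan is to invoke the axiomatic meta-theorem of Lanese et al.~(cited in the paper as \cite{Lanese2020}), which guarantees that in any reversible labeled transition system equipped with an independence relation satisfying the Loop Lemma, the Square Property, concurrency of coinitial backward transitions, and Well-foundedness, two coinitial traces are cofinal if and only if they are causally equivalent. Since \autoref{lm:axioms} establishes exactly these four axioms for IRLTS with concurrency (\autoref{def:concurrency}) as the independence relation, the bulk of the work is already done upstream: the theorem reduces to checking that our definitions of transition, trace, causal equivalence, and concurrency relation fit the format assumed in that axiomatic framework.

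The (\(\Leftarrow\)) direction is the easy half: \(\eq\) is closed under composition by \autoref{def:equivalence}, and the two generators that define it (swapping via \(t_1;t'_2 \eq t_2;t'_1\), justified by the Square Property, and cancelling \(t;t\rev \eq \epsilon\), justified by the Loop Lemma) clearly preserve both source and target, so any two causally equivalent traces are coinitial and cofinal. The (\(\Rightarrow\)) direction is where the axiomatic machinery does the real work: given two coinitial and cofinal traces \(\sigma_1, \sigma_2\), one proceeds by induction on a suitable measure such as \(|\sigma_1|+|\sigma_2|\) (here Well-foundedness provides the termination argument). One repeatedly applies the Loop Lemma to push any backward transitions to the front, then uses the Square Property (together with the fact that coinitial backward transitions are concurrent, to legitimize the swap even when the steps are both backward) to extract a common prefix from \(\sigma_1\) and \(\sigma_2\) up to \(\eq\), and finally invokes \(t;t\rev \eq \epsilon\) to eliminate the resulting redundancies.

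The main obstacle is not in reproving the meta-theorem but in making sure that our specific definitions genuinely instantiate its hypotheses. In particular, beyond \autoref{lm:axioms}, one should verify that our concurrency relation is irreflexive and symmetric (both immediate from \autoref{def:concurrency}), and, more delicately, that concurrency is \emph{preserved} under the Square Property: if \(t_1, t_2\) are coinitial and concurrent and the Square Property yields cofinal transitions \(t'_1, t'_2\), then each pair among \(t'_1\) and the corresponding step on the other branch must again be concurrent. This preservation is routine but calculus-specific, since it reduces to tracking that the identifiers used in \(t'_1\) and \(t'_2\) come from the same identifier patterns (via \(\spli^?\) and \(\dup\)) as those used in \(t_1\) and \(t_2\), so that \autoref{def:compatible-ident} and the downstream condition propagate through the square. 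Once these bookkeeping checks are in place, the theorem follows by direct application of the axiomatic result.
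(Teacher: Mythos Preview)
Your proposal is correct and takes essentially the same approach as the paper: invoke the axiomatic meta-theorem of \cite{Lanese2020} after observing that \autoref{lm:axioms} verifies the four required axioms. The paper's own argument is in fact just a one-line appeal to that framework; your additional care about instantiation (irreflexivity, symmetry, and preservation of concurrency through the square) is more than the paper spells out, but entirely in the same spirit.
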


Finally, we give the equivalent to the \enquote{unicity lemma} (\autoref{lm:ccsidentified}) for IRLTS: note that since the same transition can occur multiple times, and as backward and forward transitions may share the same identifiers, we can have the exact same guarantee that any transition uses identifiers only once only up to causal consistency.

\begin{restatable}[Unicity for IRLTS]{lemma}{lemunicityir}\label{lem:unicityir}
	For a given trace \(d\), there exist a trace \(d'\), such that \(d'\eq d\) and \(d'\) contains any identifier at most once, and if a transition in \(d'\) has identifier \(i_1 \oplus i_2 \in \ids_p\), then neither \(i_1\) nor \(i_2\) occur in \(d'\).
\end{restatable}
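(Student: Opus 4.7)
The plan is to exploit causal consistency (\autoref{thm:causal}) and the axioms of \autoref{lm:axioms} to reorganize $d$ into a canonical form to which the forward unicity of \autoref{lem:unicity} applies. The first step is to show that every trace $d$ is causally equivalent to one in \emph{parabolic form} $d_B; d_F$, with $d_B$ purely backward and $d_F$ purely forward. This is a standard normalization: whenever a forward transition is immediately followed by a backward one, either they are inverses (and cancel via $t; t\rev \eq \epsilon$), or they are concurrent (and swap via the Square Property); iterating, all backward transitions migrate to the front.

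The second step is to transfer the forward unicity of \autoref{lem:unicity} to both halves of $d_B; d_F$. Since the forward fragment of IRLTS generates identifiers exactly as the identified calculus of \autoref{sect:ident-ccs} does---the paper already observing that this extension is immediate---\autoref{lem:unicity} applies directly to $d_F$. By the Loop Lemma, reversing each transition of $d_B$ yields a forward trace, so the same unicity statement transfers to $d_B$: each atomic identifier appears at most once in $d_B$, and no part of a paired identifier $i_1 \oplus i_2$ appears on its own.

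The main obstacle is ruling out that an identifier is shared \emph{across} $d_B$ and $d_F$: a backward transition restores the seed so that its identifier becomes available again on the corresponding thread, and a later forward transition in $d_F$ may then consume it. To eliminate such cross-overlaps, I would prepend the forward history trace $\pi_R : \orig{R} \twoheadrightarrow R$ (which exists since $R$ is reachable, per \autoref{def:initial}), forming $\pi_R; d$. This concatenated trace starts from a process with empty memory, so its parabolic form has empty backward part, and it is thus causally equivalent to a purely forward trace $\pi$ whose identifiers are unique by the extended \autoref{lem:unicity}. Extracting $d'$ from $\pi$ requires, for each backward transition of $d$, identifying the matching forward transition in $\pi_R$ and cancelling both via the Loop Lemma, using the Square Property as needed to bring the matched pair adjacent; the residual forward tail then serves as $d'$, with $d' \eq d$ guaranteed by causal consistency from the fact that $\pi_R; d'$ and $\pi_R; d$ share source and target.
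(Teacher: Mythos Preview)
Your route diverges from the paper's. The paper does not normalize to parabolic form or prepend a history trace; it argues in two lines that any identifier occurring more than once in $d$ must label one forward and one backward transition (this is where \autoref{lem:unicity} is implicitly used), and then invokes causal consistency (\autoref{thm:causal}) to replace such a pair by the empty trace, iterating until no repetition remains. The paired-identifier clause is dispatched by observing that a transition with identifier $i_1\oplus i_2$ is in conflict with any transition carrying $i_1$ or $i_2$, so the same cancellation applies.

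Your first two steps are sound and in fact more careful than the paper's sketch. The gap is in step~3. After cancelling each backward transition of $d_B$ against its ``matching'' forward transition in $\pi_R$, the residual forward trace has source $\orig{R}$, not $R$: whatever is left of $\pi_R$ need not reach $R$ any more, so there is no well-defined tail starting at $R$, and the concatenation $\pi_R;d'$ you appeal to is ill-formed. Concretely, let $\orig{R}=(0,1)\col\emptymem\rhd a.P+b.Q$, let $\pi_R$ be the single $+_{\Left}$ step with identifier $0$ reaching $R$, and let $d$ be the backward step on $a$ (identifier $0$) followed by the forward $+_{\Right}$ step on $b$ (also identifier $0$). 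Then $\pi_R$ is entirely consumed by the cancellation, and the ``residual tail'' is the single $b$-step from $\orig{R}$; it does not start at $R$, so you have produced no $d'$ coinitial with $d$. Note that this same trace is awkward for the paper's own argument---and arguably for the lemma as stated---since the two occurrences of identifier $0$ in $d$ are not mutual inverses and cannot be replaced by $\epsilon$ while preserving both endpoints.
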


\subsection{Links to RCCS and CCSK: Translations and Comparisons}
\label{sec:translation}

We give the details of a possible encoding of our IRLTS terms into RCCS and CCSK terms in \autoref{sec:app-translation}.
Our calculus is more general, since it allows multiple sums, and more precise, since the identifier mechanisms is explicit, but has some drawbacks with respect to those calculi as well.

While RCCS \enquote{maximally distributes} the memories to all the threads, our calculus for the time being forces all the memories to be stored in one shared place.
Poor implementations of this mechanism could result in important bottlenecks, as memories need to be centralized: however, we believe that an asynchronous handling of the memory accesses could allow to bypass this limitation in our calculus, but reserve this question for future work.
With respect to CCSK, our memory events are potentially duplicated every time the \(\dup\) operator is applied, resulting in a space waste, while CCSK never duplicates any memory event.
Furthermore, the stability of CCSK's terms through execution---as the number of threads do not change during the computation---could constitute another advantage over our calculus.

We believe the encoding we present to be fairly straightforward, and that it will open up the possibility of switching from one calculus to another based on the needs to distribute the memories or to reduce the memory footprint.

\section{Advances and New Features in Reversible Calculi} %

\subsection{Replication, and Why We Cannot Create Exact Copies of Memory Events}

Adding replication to the identified calculi is easy: it suffices to add the replication operator \(!P\) to the operators (\autoref{def:operators}) and the \enquote{replication group} of \autoref{fig:reprules} to the ILTS rules (\autoref{fig:ltsrules}).
\begin{figure}%
	\begin{tcolorbox}[title=Replication Group]
		\begin{prooftree}
			\hypo{[\spli_2](\seed) \col P \fwlts{i}{\mu} \seed' \col P'}
			\infer[]1[repl.\(_1\)]{\seed \col !P \fwlts{i}{\mu} ([\spli_1]\seed, \seed')\col !P \mid P'}
		\end{prooftree}
		\\[1.8em]
		\begin{prooftree}
			\hypo{[\spli_2]([\spli_1](\seed)) \col P \fwlts{i_1}{\lambda} \seed_1 \col P'}
			\hypo{[\spli_2]([\spli_2](\seed)) \col P \fwlts{i_2}{\out{\lambda}} \seed_2 \col P''}
			\infer[]2[repl.\(_2\)]{\seed\col !P \fwlts{i_1 \oplus i_2}{\tau} ([\spli_1](\seed), (\seed_1, \seed_2))\col !P \mid (P' \mid P'')}
		\end{prooftree}
	\end{tcolorbox}
	\caption{Additional rules for the identified labeled transition system}
	\label{fig:reprules}
\end{figure}
We adopt a pair of rules to obtain a finitely branching transition system without loosing computational nor decisional power~\cite[Section 4.3.1]{Busi2009}.
The handling of the identifier components is guided by the intuition: repl.\(_1\) can be seen as a two-steps procedure, combining the operations of splitting \(\seed \col !P\) into \([\spli](\seed) \col !P \mid P\) and then performing the transition from \([\spli_2](\seed) \col P\).
The situation with repl.\(_2\) is similar, with \(\seed \col !P\) split \emph{in three} \((\id \times [\spli]) ([\spli](\seed)) \col !P \mid (P \mid P)\) to let the two last threads synchronize.

Proving that the rules are still well-formed (\autoref{lem:lts-well}) amounts to verify that \([\spli_1](\seed)\) and \([\spli_2](\seed)\) are compatible, and that they remain compatible after any transition.
This gives the unicity property (\autoref{lem:unicity}) as well as the other properties listed in \autoref{sec:ident-proof} for free.

We would like now to argue that there are essentially four options in the handling of the memory for the forward part of the reversible calculus: we detail in \autoref{fig:exporules} only the forward part of repl.\(_1\), ignoring the seeds and assuming the existence of an operator \(\re\) to tag memories and of a \enquote{memory difference} operator \(m' \setminus m\) that returns the events in \(m'\) not in \(m\). %
\begin{figure}
	\begin{tcolorbox}[title=Replication Group]
		\begin{prooftree}
			\hypo{ m \rhd P \fwlts{i}{\lambda} m' \rhd P'}
			\infer[]1[repl.\(_1^{a)}\)]{m \rhd !P \fwlts{i}{\lambda} [\re m, \re m'] \rhd !P \mid P'}
		\end{prooftree}
		\hfill
		\begin{prooftree}
			\hypo{ m \rhd P \fwlts{i}{\lambda} m' \rhd P'}
			\infer[]1[repl.\(_1^{c)}\)]{m \rhd !P \fwlts{i}{\lambda} [\re m,\re (m' \setminus m)] \rhd 0 \mid P'}
		\end{prooftree}
		\\[1.8em]
		\begin{prooftree}
			\hypo{ m \rhd P \fwlts{i}{\lambda} m' \rhd P'}
			\infer[]1[repl.\(_1^{b)}\)]{m \rhd !P \fwlts{i}{\lambda} [\re m,\re (m' \setminus m)] \rhd !P \mid P'}
		\end{prooftree}
		\hfill
		\begin{prooftree}
			\hypo{ m \rhd P \fwlts{i}{\lambda} m' \rhd P'}
			\infer[]1[repl.\(_1^{d)}\)]{m \rhd !P \fwlts{i}{\lambda} [\re m, [\emptymem, \re (m' \setminus m)]] \rhd 0 \mid (!P\mid P')}
		\end{prooftree}
	\end{tcolorbox}
	\caption{Forward rules of the identified reversible labeled transition system with replication}
	\label{fig:exporules}
\end{figure}
Note that the rules c) and d) are not exactly extensions of the identified rule, but enable to \enquote{split} a duplicated process between its future and its past, preserving a copy of its current state in d).
More conservatively, those rules could impose \(m = \dup \emptymem\) to prevent the duplication of memory altogether.

Without this restriction, we argue that \emph{un-distinguishable copies of events cannot exist} while maintaining \autoref{thm:causal}.
Let us illustrate this point with two examples, using the a) and b) rules:

\begin{align*}
	(0, 1) \col \emptymem \rhd a . !b & \fwlts{0}{a} (1, 1) \col \mem{0, a, \Null} \rhd !b \tag{act.}                                                                       \\
	                                  & \fwlts{2}{b} ((1, 2), (4, 2)) \col [\mem{0, a, \Null}, \mem{2, b, \Null} . \mem{0, a, \Null}] \rhd !b \mid 0 \tag{repl.\(_1^{a)}\)} \\
	                                  & \bwlts{2}{b} ((1, 2), (2, 2)) \col [\mem{0, a, \Null}, \mem{0, a, \Null}] \rhd !b \mid b \tag{act.}                                 \\
	                                  & \bwlts{0}{a} (0, 1) \col \emptymem \rhd a.(!b \mid b) \tag{act.}
\end{align*}

\begin{align*}
	(0, 1) \col \emptymem \rhd a . !b & \fwlts{0}{a} (1, 1) \col \mem{0, a, \Null} \rhd !b \tag{act.}                                                   \\
	                                  & \fwlts{2}{b} ((1, 2), (4, 2)) \col [\mem{0, a, \Null}, \mem{2, b, \Null}] \rhd !b \mid 0 \tag{repl.\(_1^{b)}\)} \\
	                                  & \bwlts{2}{b} ((1, 2), (2, 2)) \col [\mem{0, a, \Null}, \emptymem] \rhd !b \mid b \tag{act.}
\end{align*}

Note that in the first case \emph{the origin of the process changed} and that in the second, \emph{the process cannot backtrack to an initial process anymore}, as it can not undo the transition identified by \(0\) anymore.
Both cases make it impossible to preserve causal consistency (\autoref{thm:causal}).
Stated differently, \emph{forward rules in the replication group must be un-done using corresponding backward rules}, but as the copy of the replicated process does not keep track of its \enquote{duplicated status}, the only way to enforce this rule is to \emph{mark the memories}.

This observation implies that at least in the a) and b) rules, the \(\re\) operator is a necessity if causal consistency needs to be preserved.
More liberal definition of our seed mechanism could allow process resulting from the application of the c) and d) rules to backtrack to an initial process, that would be different from the original process: if and how this mechanism could be exploited to execute in parallel the future and the past of the same process remains to be determined.
In any case, by using the \(\re\) operator in the memory to \enquote{force} the backward transitions to fold up the replicated process back to the state before the execution, we conjecture that none of the usual properties would be lost.
Furthermore, we conjecture that introducing different \(\re\) symbols for the rules a)--d) and adopting the exact symmetric rules for the backward transitions would \emph{allow the four rules to co-exist}, opening the ability to represent different behaviors when it comes to duplicating processes or memories.

\subsection{Contexts, and How We Do Not Have Congruences in Reversible Calculi Yet}
\label{sec:context}

We remind the reader of the definition of contexts \(\cont{\cdot}\) on CCS terms \(\proc\), before introducing contexts \(\icont{\cdot}\) (\resp \(\mcont{\cdot}\), \(\rcont{\cdot}\)) on identified terms \(\iproc\) (\resp on memories \(\Mem\), on identified reversible terms \(\rproc\)).

\begin{definition}[Term Context]
	\label{def:term-context}
	A context \(\cont{\cdot} : \proc \to \proc\) is inductively defined using all process operators and a fresh symbol \(\cdot\) (the \emph{slot}) as follows (omitting the symmetric contexts):
	\begin{equation*}
		\cont{\cdot} \coloneqq \lambda . \cont{\cdot} \BNFsepa P \mid \cont{\cdot} \BNFsepa \cont{\cdot} \bs \lambda \BNFsepa \lambda_1.P + \lambda_2 . \cont{\cdot} \BNFsepa P \ovee \cont{\cdot} \BNFsepa P \sqcap \cont{\cdot} \BNFsepa \cdot
	\end{equation*}
\end{definition}

When placing an identified term into a context, we want to make sure that a well-identified process remains well-identified, something that can be easily achieved by noting that for all process \(P\) and seed \(\seed\), \((\unif \spli^? \seed) \col P\) is always well-identified, for the following definition of \(\unif\):

\begin{definition}[Unifier]
	\label{def:unifier}
	Given a process \(P\) and a seed \(\seed\), we define
	\begin{align*}
		\unif(\ip, P)                 & = \ip \col P &  &  &
		\unif ((\seed_1, \seed_2), P) & =
		\begin{dcases*}
			(\unif(\spli_1 (\seed_1), P)) & if \(\seed_1\) is not of the form \(\ip_1\) \\
			(\spli_1 (\seed_1), P)        & otherwise
		\end{dcases*}
	\end{align*}
\end{definition}

\begin{definition}[Identified Context]
	\label{def:icontext}
	An identified context \(\icont{\cdot} : \iproc \to \iproc\) is defined using term contexts as \(\icont{\cdot} = (\unif \spli^? \cdot) \col \cont{\cdot}\).
\end{definition}

\begin{example}
	A term \((0, 1) \col a + b\) placed in the identified context \((\unif \spli^? \cdot) \col \cdot \mid \out{a}\) would result in the term \(((0, 2), (1, 2)) \col a + b \mid \out{a}\) from \autoref{example-trans}.
	The term \(((0, 2), (1, 2))\col a \mid b\) placed in the same context would give \(((0, 4), (1, 4)), (2, 4)) \col (a \mid b)\mid \out{a}\).
\end{example}

We now turn our attention to \emph{memory contexts}, and write \(\Mem\) for the set of all memories.

\begin{definition}[Memory Context]
	\label{def:memory-context}
	A memory context \(\mcont{\cdot} : \Mem \to \Mem\) is inductively defined using the operators of \autoref{def:memory}, the operations of Definitions~\ref{def:op-on-mem} and \ref{def:dup}, an \enquote{append} operation and a fresh symbol \(\cdot\) (the \emph{slot}) as follows:
	\begin{equation*}
		\mcont{\cdot} \coloneqq [\mcont{\cdot}, m] \BNFsepa [m, \mcont{\cdot}] \BNFsepa e.\mcont{\cdot} \BNFsepa \mcont{\cdot}.e \BNFsepa \dup \mcont{\cdot} \BNFsepa \mcont{\cdot}[j \shortleftarrow k] \BNFsepa \mcont{\cdot} \concat_j (o, P, d) \BNFsepa \cdot
	\end{equation*}
	Where \(e. m = [e.m_1, e.m_2]\)and \(m . e = [m_1 . e, m_2 . e]\) if \(m = [m_1, m_2]\), and \(m.e = m' . e . \emptymem\) if \(m = m' . \emptymem\).
\end{definition}

\begin{definition}[Reversible Context]
	\label{def:rcontext}
	A reversible context \(\rcont{\cdot} : \rproc \to \rproc\) is defined using term and memory contexts as \(\rcont{\cdot} = (\unif \spli^? \cdot) \col \mcont{\cdot} \rhd \cont{\cdot}\).
	It is \emph{memory neutral} if \(\mcont{\cdot}\) is built using only \(\cdot\), \([\emptymem, \mcont{\cdot}]\) and \([\mcont{\cdot}, \emptymem]\).
\end{definition}

Of course, a reversible context can change the past of a reversible process \(R\), and hence the initial process \(\orig{R}\) to which it corresponds (\autoref{def:initial}). %

\begin{example}
	Let \(\rcont{\cdot}_1 = [\emptymem, \cdot] \rhd P \mid \cont{\cdot}\) and \(\rcont{\cdot}_2 = \dup[\cdot] \rhd P \mid \cont{\cdot}\).
	Letting \(R = (1, 1) \col \mem{0, a, \Null} \rhd b\), we obtain \(\rcont{R}_1 = ((1, 2), (2, 2)) \col [\emptymem, \mem{0, a, \Null}] \rhd P \mid b\) and \(\rcont{R}_2 = ((1, 2), (2, 2)) \col [\mem{0, a, \Null}, \mem{0, a, \Null}] \rhd P \mid b\), and we have
	\begin{align*}
		\rcont{R}_1 & \bwlts{0}{a} ((1, 2), (0, 2)) \col [\emptymem, \emptymem] \rhd P \mid a.b &  &  & \rcont{R}_2 & \bwlts{0}{a} (0, 1) \col \emptymem \rhd a.(P \mid b)
	\end{align*}
\end{example}

Note that not all of the reversible contexts, when instantiated with a reversible term, will give accessible terms.
Typically, a context such as \([\emptymem, \cdot] \rhd \cdot\) will be \enquote{broken} in the sense that the memory pair created will never coincide with the structure of the term and its memory inserted in those slots. %
However, even restricted to contexts producing accessible terms, reversible contexts are strictly more expressive that term contexts.
To make this more precise in \autoref{lem:contexts}, we use two bisimulations close in spirit to Forward-reverse bisimulation~\cite{Phillips2007} and back-and-forth bisimulation~\cite{Bednarczyk1991}, but that leave some flexibility regarding identifiers and corresponds to Hereditary-History Preserving Bisimulations~\cite{Aubert2020b}.
Those bisimulations---\BF and \SBF---are recalled in \autoref{sec:bf} and proven below \emph{not} to be congruences, not even under \enquote{memory neutral} contexts.

\begin{restatable}{lemma}{lemcont}\label{lem:cont}
	\label{lem:contexts}
	For all non-initial reversible process \(R\), there exists reversible contexts \(\rcont{\cdot}\) such \(\orig{\rcont{R}}\) is reachable and for all term context \(\cont{\cdot}\), \(\cont{\orig{R}}\) and \(\orig{\rcont{R}}\) are not \BF.
\end{restatable}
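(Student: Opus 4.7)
The plan is to lift the $\rcont{\cdot}_2$ construction from the example directly preceding the lemma to an arbitrary non-initial $R$, and then exploit the structural rigidity of term contexts under \BF.

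Fix $R = \seed \col m \rhd Q$ with $m$ nonempty, pick any name $n$, and let $P = n.0$. Take $\rcont{\cdot}$ to be the reversible context $(\unif \spli^? \cdot) \col \dup[\cdot] \rhd P \mid \cont{\cdot}$, as in that example. Since $\dup$ places identical copies of $m$ on both branches of the top-level parallel, each memory event of $m$ is reversible by the appropriate backward rule of \autoref{fig:birltsrules}, in each case collapsing the twin events on the $\dup$-duplicated memory. Induction on the size of $m$ then terminates the full backward computation in an initial process of shape $\orig{\rcont{R}} = \seed_0 \col \emptymem \rhd \lambda_1.\cdots.\lambda_k.(P \mid P_R')$, where $\lambda_1,\ldots,\lambda_k$ is the prefix sequence recorded in $m$ and $P_R'$ is the CCS component derived from $Q$. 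Reachability of $\rcont{R}$ from $\orig{\rcont{R}}$ follows from \autoref{thm:causal}.

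A short induction on the grammar of \autoref{def:term-context} yields that for any term context $\cont{\cdot}$, the CCS process underlying $\orig{R}$ appears verbatim as a subterm of $\cont{\orig{R}}$: no clause splices into or reshapes the slot's argument. In particular the prefixes $\lambda_1,\ldots,\lambda_k$ inside $\orig{R}$'s body retain their original continuation in $\cont{\orig{R}}$, and $P$ is never grafted underneath them — the key structural discrepancy with $\orig{\rcont{R}}$.

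The behavioral distinction under \BF follows from the causal structure exposed by back-and-forth matching. Consider in $\orig{\rcont{R}}$ the forward trace firing $\lambda_1,\ldots,\lambda_k$ and then $n$: the $n$-event causally depends on every $\lambda_i$, so backward-eliminating any $\lambda_i$ requires first backward-eliminating the $n$-move. No trace of $\cont{\orig{R}}$ exhibits this dependency, because the subterm $\lambda_1.\cdots.\lambda_k.P_R'$ inherited from $\orig{R}$ admits no $n$-successor, and any $n$-move available elsewhere in the context is causally independent of the $\lambda_i$'s. The mismatch is detected by the \BF game recalled in \autoref{sec:bf}. The main obstacle is the universal quantification over $\cont{\cdot}$: since $\cont{\cdot}$ may itself mention $n$ or the $\lambda_i$'s, the argument cannot hinge on freshness of names but must exploit the intrinsically causal information that \BF tracks — namely the causal link between $P$'s action and the past of $R$, which only a reversible context (specifically the $\dup$ operator) can manufacture.
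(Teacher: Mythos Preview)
Your construction and the paper's differ substantially. The paper does \emph{not} use the $\dup$-based context $\rcont{\cdot}_2$ from the preceding example; instead it takes the much simpler $\rcont{\cdot} = \cdot \rhd \cdot + b$ with $b$ chosen \emph{fresh} (not occurring in $\orig{R}$), so that $\orig{\rcont{R}}$ has the shape $a.(P'+b)$ when $\orig{R}=a.P'$. The whole argument then hinges on that freshness: a term context cannot splice a fresh $b$ underneath the existing prefix $a$ of $\orig{R}$, and any attempt to compensate by adding a separate $a.b$ branch is defeated because $a.(P'+b)$ keeps the choice between $P'$ and $b$ open after $a$, whereas the context-added branch does not.

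Your proposal has a genuine gap precisely where you try to avoid freshness. You write ``pick any name $n$'' and then explicitly claim ``the argument cannot hinge on freshness of names''. This is false. Take $\orig{R}=a.0$, so $R$ has memory $\mem{0,a,\Null}$ and process $0$, and choose $n=a$. Then $\orig{\rcont{R}}$ is (the origin of a process with underlying term) $a.(a.0\mid 0)$. Now the term context $\cont{\cdot}=a.(\cdot\mid 0)$ gives $\cont{\orig{R}}=a.(a.0\mid 0)$, which is literally the same term and hence trivially \BF with $\orig{\rcont{R}}$. So without freshness the conclusion fails.

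A second, related gap: even granting freshness of $n$, your causal claim that ``any $n$-move available elsewhere in the context is causally independent of the $\lambda_i$'s'' is incorrect. The context $\cont{\cdot}=\lambda_1.\cdots.\lambda_k.n.0\ \ovee\ \cdot$ places an $n$-move causally \emph{below} copies of all the $\lambda_i$, and \BF matches transitions by label, not by which subterm they originate from. What actually rules such contexts out is that the $\orig{R}$ subterm sitting in the slot contributes an extra $\lambda_1$-branch (via $\ovee_{\Right}$) whose continuation lacks $n$; the bisimulation game then exposes that branch. That is a different argument from the one you sketch, and it too ultimately relies on $n$ being fresh in $\orig{R}$. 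In short: reinstate freshness, and either switch to the paper's simpler $+\,b$ construction or rework the distinguishing argument so that it exploits the unavoidable presence of the $\orig{R}$ subterm rather than an (incorrect) independence claim.
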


\begin{theorem}
	\BF and \SBF are not congruences, not even under memory neutral contexts.
\end{theorem}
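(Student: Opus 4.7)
The plan is to refute congruence by producing, for each of \BF and \SBF, two bisimilar reversible processes \(R_1, R_2\) together with a memory-neutral context \(\rcont{\cdot}\) such that \(\rcont{R_1}\) and \(\rcont{R_2}\) are no longer bisimilar. It is convenient to aim for a single pair \((R_1, R_2)\) that is simultaneously \BF- and \SBF-bisimilar, and a single context \(\rcont{\cdot}\) that breaks both simulations; the driving intuition is that a memory-neutral context of the form \([\emptymem, \cdot] \rhd Q \mid \cont{\cdot}\) re-shapes the memory pair of its argument in a way that no term context can, exactly in the spirit of Lemma~\ref{lem:contexts}.

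The first step is to choose \(R_1\) and \(R_2\) to have the same current term \(P\) and the same immediately-available forward and backward labels, but to encode, deeper in their memory, a silent branch choice: a natural candidate is to take \(R_2\) as the result of a \(+_{\Left}\) step out of a guarded sum whose right branch is guarded by a label that becomes relevant only in the presence of a matching complement, while \(R_1\) comes from a plain act.\ step. Without any context, the hidden branch of \(R_2\) remains invisible to both bisimulations, because every backward unfolding of \(R_2\) is matched step-for-step by \(R_1\) up to the information actually inspected by the bisimulation game (labels, identifiers, and the reachable-term shape).

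The second step applies the memory-neutral context \(\rcont{\cdot} = [\emptymem, \cdot] \rhd Q \mid \cont{\cdot}\) with \(Q\) chosen so that after enough backward steps the previously-hidden branch of \(R_2\) surfaces in the term and is ready to synchronize with \(Q\). Concretely, if the hidden branch of \(R_2\) is guarded by \(\lambda'\), take \(Q\) to contain \(\out{\lambda'}\), so that an additional synchronization \(\tau\)-transition becomes available in \(\rcont{R_2}\) (after backtracking) that has no counterpart in \(\rcont{R_1}\). This discrepancy directly contradicts any bisimulation containing the pair \((\rcont{R_1}, \rcont{R_2})\).

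Finally, I would verify soundness: the context is memory-neutral by construction (only \([\emptymem, \cdot]\) is used at the memory level); reachability of \(\rcont{R_i}\) follows from the construction in the proof of Lemma~\ref{lem:contexts}; and the argument applies uniformly to \BF and \SBF because the failure of the bisimulation game is driven by a label-level distinction that both relations must track. The principal obstacle is ensuring that \(R_1\) and \(R_2\) are genuinely bisimilar to begin with: the precise definitions from \autoref{sec:bf} must be unpacked carefully, because if either relation already examines the full memory event (including the \((o, P, d)\) tags stored by the sum rules) then the \enquote{hidden branch} trick fails, and an alternative counter-example exploiting the mismatch between the memory-pair structure imposed by the context and the thread structure of the inserted process—in the style of the example preceding Lemma~\ref{lem:cont}—would have to be substituted in its place.
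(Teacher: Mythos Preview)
Your proposal has a genuine gap in the choice of context. The construction hinges on a memory-neutral context of parallel shape \([\emptymem,\cdot]\rhd Q\mid\cont{\cdot}\) together with a ``hidden branch'' in \(R_2\) guarded by some \(\lambda'\) that \(Q=\out{\lambda'}\ldots\) is supposed to activate. But this cannot work: if the discarded branch of \(R_2\) is guarded by a label \(\lambda'\) different from the one \(R_1\) backtracks to, then already \(\orig{R_2}\) offers a \(\lambda'\)-transition that \(\orig{R_1}\) does not, so \(R_1\) and \(R_2\) are not \BF (nor \SBF) to begin with---the branch is not hidden from the bisimulation game, which starts at the origins. If instead the discarded branch carries the same initial label as \(R_1\), then placing both processes in parallel with any \(Q\) preserves bisimilarity: \(Q\mid\orig{R_1}\) and \(Q\mid\orig{R_2}\) remain indistinguishable, since parallel composition with a fixed partner does not expose the structural difference between a sum-before-prefix and a prefix-before-sum. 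Your closing caveat anticipates a difficulty, but misdiagnoses it: the obstacle is not that the bisimulations might inspect the \((o,P,d)\) tags in memory events (they do not---only labels and identifiers are compared), it is that the ``hidden'' branch is fully visible as a transition from the origin.

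The paper's argument avoids this by using a \emph{sum} context rather than a parallel one. It takes \(R_1=(1,1)\col\mem{0,a,\Null}\rhd b+b\) and \(R_2=(1,1)\col\mem{0,a,(+,a.b,\Right)}\rhd b\), whose origins \(a.(b+b)\) and \((a.b)+(a.b)\) are \BF and \SBF; the memory-neutral context is simply \(\rcont{\cdot}=\cdot\rhd\cdot+c\) (memory context is the bare slot, hence trivially memory-neutral). Applying it yields origins \(a.((b+b)+c)\) and \((a.(b+c))+(a.b)\), which are separated by the classical prefix/sum argument: after an \(a\)-step the first always offers \(c\), while the second, on its right branch, does not. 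This is also the mechanism actually used in the proof of \autoref{lem:contexts}, which you cite but whose construction (again \(\cdot+b\), not a parallel context) you do not follow.
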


\begin{proof}
	The processes \(R_1 = (1, 1) \col \mem{0, a, \Null} \rhd b+b\) and \(R_2 = (1, 1) \col \mem{0, a, (+, b.a, \Right)} \rhd b\) are \BF, but letting \(\rcont{\cdot} = \cdot \rhd \cdot + c\), \(\rcont{R_1}\) and \(\rcont{R_2}\) are not.
	Indeed, it is easy to check that \(R_1\) and \(R_2\), as well as \(\orig{R_1} = (0, 1) \col \emptymem \rhd a .(b+b)\) and \(\orig{R_2} = (0, 1) \col \emptymem \rhd (a.b) + (a.b)\), are \BF, but \(\orig{\rcont{R_1}} = (0, 1) \col \emptymem \rhd a.((b+b)+c)\) and \(\orig{\rcont{R_2}} = (0, 1) \col \emptymem \rhd (a.(b+c))+(a.b)\) are not \BF, and hence \(\rcont{R_1}\) and \(\rcont{R_2}\) cannot be either.
	The same example works for \SBF.
\end{proof}

We believe similar reasoning and example can help realizing that \emph{none of the bisimulations introduced for reversible calculi are congruences} under our definition of reversible context.
Some congruences for reversible calculi have been studied~\cite{Aubert2016jlamp}, but they allowed the context to be applied only to the origins of the reversible terms: whenever interesting congruences allowing contexts to be applied to non-initial terms exist is still an open problem, in our opinion, but we believe our formal frame will allow to study it more precisely.

\bibliographystyle{splncs04}
\bibliography{standalone}

\newpage
\appendix

\section{Appendix}
\label{sc:app}

\subsection{Proof for \autoref{sec:ident} -- \nameref*{sec:ident}}
\label{sec:app-iden}

\begin{lemma}[Seed splitter is well-defined]
	For all identifier structure \(\idst\) and seed \(\seed\), \([\spli](\seed)\) is a seed.
\end{lemma}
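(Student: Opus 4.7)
The proof will proceed by structural induction on the seed $\seed$. The base case is immediate: if $\seed$ is an identifier pattern $\ip$, then $[\spli](\seed) = \spli(\ip)$ is by definition of a splitter (\autoref{def:split}) a pair of compatible identifier patterns, which qualifies as a seed by \autoref{def:seed}.

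For the inductive step, suppose $\seed = (\seed_1, \seed_2)$ with all identifier patterns occurring in $\seed_1$ and $\seed_2$ pairwise compatible. By the induction hypothesis, $[\spli](\seed_1)$ and $[\spli](\seed_2)$ are seeds. To conclude that $([\spli](\seed_1), [\spli](\seed_2))$ is itself a seed, I must show that every identifier pattern in $[\spli](\seed_1)$ is compatible with every identifier pattern in $[\spli](\seed_2)$, and also that the patterns within each of $[\spli](\seed_1)$ and $[\spli](\seed_2)$ are pairwise compatible (the latter being already granted by the IH).

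The key lemma I expect to need, and which is the main obstacle, is a \emph{stream-containment} property of the splitter: for every identifier pattern $\ip$, the streams $\idst(\spli_1(\ip))$ and $\idst(\spli_2(\ip))$ are both subsets of the stream $\idst(\ip)$. This property is not explicitly stated in \autoref{def:split}, but it is clearly satisfied by the canonical example on $\iZ$, where $\spli(c,s) = ((c,2s),(c+s,2s))$ partitions the even-indexed suffix of $\idst(c,s)$ into two interleaved substreams. I would therefore either lift this as an implicit requirement on splitters, or verify it on the spot for the intended class. Granted this property, if $\ip$ and $\ip'$ are compatible identifier patterns from $\seed_1$ and $\seed_2$ respectively, then any identifier pattern obtained by iterating $\spli_j$ on $\ip$ (resp.\ $\ip'$) generates a stream contained in $\idst(\ip)$ (resp.\ $\idst(\ip')$), and these two streams are disjoint by compatibility. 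Hence the resulting patterns are compatible, which is what is needed.

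Combining these observations, pairwise compatibility across $[\spli](\seed_1)$ and $[\spli](\seed_2)$ follows from the pairwise compatibility across $\seed_1$ and $\seed_2$ together with the stream-containment property; and the in-seed compatibility comes from the induction hypothesis. This completes the inductive step and the proof.
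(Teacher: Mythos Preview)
Your proposal is correct and follows the same inductive structure as the paper's proof: induction on the nesting depth of the seed, with the base case handled directly by \autoref{def:split} and the inductive step reducing cross-compatibility of patterns in \([\spli](\seed_1)\) and \([\spli](\seed_2)\) to cross-compatibility of patterns in \(\seed_1\) and \(\seed_2\).

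You have in fact been more careful than the paper on one point. The paper's proof simply asserts that if \(\ip_1 \compat \ip_2\) then the four patterns in \((\spli(\ip_1), \spli(\ip_2))\) are pairwise compatible, and leaves the general inductive step to the reader. As you correctly observe, this step only goes through if one knows that \(\idst(\spli_j(\ip)) \subseteq \idst(\ip)\) for \(j \in \{1,2\}\)---your \emph{stream-containment} property---which is not part of \autoref{def:split} as stated but is evidently intended (and holds for the canonical splitter on \(\iZ\)). The paper silently relies on this; you make it explicit and flag it as an implicit requirement, which is the right call.
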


\begin{proof}
	We simply have to prove that all the identifier patterns occurring in \([\spli](\seed)\) are pairwise compatible.
	If \(\seed = \ip\), then it comes from the splitter's definition (\autoref{def:split}).
	If \(\seed = (\seed_1, \seed_2)\), then it follows by induction on the degree of nesting of seeds: if \((\seed_1, \seed_2)\) is a pair of identifier patterns \((\ip_1, \ip_2)\), then we know they are compatible, and so are
	\begin{equation}
		[\spli](\ip_1, \ip_2) = ([\spli](\ip_1), [\spli](\ip_2)) = (\spli (\ip_1), \spli(\ip_2)) \label{eq-expanded}
	\end{equation}

	If \((\seed_1, \seed_2)\) is a pair of seeds, or a seed and an identifier pattern, then it follows by induction.
\end{proof}

\subsection{Structural Relations for IRLTS and ILTS}
\label{sec:struct}

We introduce the structural relation for our two calculi, IRLTS and ILTS below, starting with IRLTS and then restricting it to ILTS.

\begin{definition}[Structural Relation for IRLTS]
	\label{def:struct-irlts}
	The structural relation \(\equiv\) for IRLTS is defined in \autoref{fig:cong}.
	For the rules in the Sum and Restriction groups, the seed and memory are unchanged and hence left implicit: both sides are prefixed by \enquote{\(\seed \col m \rhd\)}.
	\begin{figure}
		\begin{description}[style=unboxed]
			\item[\hspace{3em}Sums]
				\begin{subequations}
					\begin{multicols}{3}
						\begin{align}
							P + Q       & \equiv Q + P \label{s11} \tag{C\(^+\)}        \\
							(P + Q) + R & \equiv P + ( Q + R) \label{s12} \tag{A\(^+\)} \\
							P + P       & \equiv P \label{s14} \tag{I\(^+\)}
						\end{align}
						\begin{align}
							P \ovee Q           & \equiv Q \ovee P \label{s21} \tag{C\(^{\ovee}\)}            \\
							(P \ovee Q) \ovee R & \equiv P \ovee ( Q \ovee R) \label{s22} \tag{A\(^{\ovee}\)} \\
							P \ovee P           & \equiv P \label{s24} \tag{I\(^{\ovee}\)}                    \\
							P \ovee 0           & \equiv P \label{s23} \tag{Z\(^{\ovee}\)}
						\end{align}
						\begin{align}
							P \sqcap Q            & \equiv Q \sqcap P \label{s31} \tag{C\(^{\sqcap}\)}             \\
							(P \sqcap Q) \sqcap R & \equiv P \sqcap ( Q \sqcap R) \label{s32} \tag{A\(^{\sqcap}\)} \\
							P \sqcap P            & \equiv P \label{s34} \tag{I\(^{\sqcap}\)}                      \\
							P \sqcap 0            & \equiv P \label{s33} \tag{Z\(^{\sqcap}\)}
						\end{align}
					\end{multicols}
				\end{subequations}
			\item[\hspace{3em}Alpha-equivalence]
				\begin{subequations}
					\begin{align}
						\seed \col \emptymem \rhd P \equiv \seed \col \emptymem \rhd Q &  & \text{ if } P =_{\alpha} Q \label{a1} \tag{\(\alpha\)}
					\end{align}
					Where \(=_{\alpha}\) is the \(\alpha\)-equivalence defined as usual.
				\end{subequations}
			\item[\hspace{3em}Parallel Composition]
				\begin{subequations}
					\renewcommand{\theequation}{p\textsubscript{\arabic{equation}}}
					\begin{align}
						[\seed_1, \seed_2] \col [m_1, m_2] \rhd P_1 \mid P_2                               & \equiv [\seed_2, \seed_1] \col [m_2, m_1] \rhd P_2 \mid P_1 \label{p1} \tag{C\(^{|}\)}                              \\
						[[\seed_1, \seed_2], \seed_3]] \col [[m_1, m_2], m_3]] \rhd (P_1 \mid P_2)\mid P_3 & \equiv [\seed_1, [\seed_2, \seed_3]] \col [m_1, [m_2, m_3]] \rhd P_1 \mid( P_2 \mid P_3) \label{p2} \tag{A\(^{|}\)} \\
						\spli^?(\seed) \col [m, \emptyset] \rhd P \mid 0                                   & \equiv \seed \col m \rhd P \label{p3} \tag{Z\(^{|}\)}
					\end{align}
				\end{subequations}
			\item[\hspace{3em}Restriction]
				\begin{subequations}
					\renewcommand{\theequation}{rs\textsubscript{\arabic{equation}}}
					\begin{align}
						(P \bs a) \bs b     & \equiv (P \bs b) \bs a \label{rs2} \tag{C\(^{\bs}\)}                                                                                       \\
						(P + Q ) \bs a      & \equiv (P \bs a) + (Q\bs a) \label{rs4} \tag{D\(_+^{\bs}\)}                                                                                \\
						(P \ovee Q ) \bs a  & \equiv (P \bs a) \ovee (Q\bs a) \label{rs8} \tag{D\(_{\ovee}^{\bs}\)}                                                                      \\
						(P \sqcap Q ) \bs a & \equiv (P \bs a) \sqcap (Q\bs a) \label{rs5} \tag{D\(_{\sqcap}^{\bs}\)}                                                                    \\
						0 \bs a             & \equiv 0 \label{rs3} \tag{Z\(^{\bs}\)}                                                                                                     \\
						(P \mid Q) \bs a    & \equiv (P \bs a) \mid Q                                                 &  & \text{ if }a, \out{a} \notin \fn{Q} \label{rs1} \tag{E\(_1\)} \\
						P \bs a             & \equiv P                                                                &  & \text{ if }a, \out{a} \notin \fn{P} \label{rs6} \tag{E\(_2\)}
					\end{align}
					Where \(\fn{P}\) is defined as the set of free names in \(P\), the only binder being restriction.
				\end{subequations}
		\end{description}
		\caption{Structural relation for IRLTS}
		\label{fig:cong}
	\end{figure}
\end{definition}

We have four comments about this relation:

\begin{enumerate}
	\item We do not require it to be a congruence at this point: as the correct notion of context is not completely clear (see \autoref{sec:context}), we prefer to put it on hold for now.
	\item Alpha-equivalence can be applied only on memory-less processes: how renaming should incorporate past memory events may not be straightforward. Indeed, applying a substitution to a sub-term's memory but not another can result in two events having the same identifier but different labels, a case not included in our IRLTS for now.
	\item The commutativity rules for sums allows to remove the \(\Right\) and \(\Left\) indications from the memory events.
	\item We expect all the properties listed in \autoref{sec:rever-prop} to hold, but did not included it in the discussion for simplicity.
\end{enumerate}

\begin{definition}[Structural Relation for ILTS]
	\label{def:struct-rlts}
	The structural relation for ILTS can be read from the one for IRLTS (\autoref{def:struct-irlts}) by simply removing the memories.
\end{definition}

\subsection{Proofs for \autoref{sect:ident-ccs} -- \nameref*{sect:ident-ccs}}
\label{sec:ident-proof}

\begin{lemma}
	\label{lem:well-id}
	For every \(P\),
	\begin{enumerate}
		\item for every \(\ip\), \(\spli^? \ip \col P \) is well-identified,
		\item for every \(\seed\) and \(j \in \{1, 2\}\), if \(\seed\col P\) is well-identified then \([\spli_j](\seed)\col P\) is.
	\end{enumerate}
\end{lemma}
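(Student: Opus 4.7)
Both items will be proved by induction on the structure of \(P\), with the only case that requires recursion being parallel composition (\autoref{def:split-help} and \autoref{def:seed} only ``look inside'' at the parallel-composition layer). The key observation is that \autoref{def:well-id-proc}, \autoref{def:split-help}, and the extension of the splitter to seeds are all defined by the same case analysis: whether \(P\) is of the form \(P_1 \mid P_2\) or not.

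\medskip
For item 1, fix an identifier pattern \(\ip\). If \(P\) is not of the form \(P_1 \mid P_2\), then by \autoref{def:split-help} we have \(\spli^?(\ip, P) = \ip \col P\), and this process is well-identified by the ``otherwise'' clause of \autoref{def:well-id-proc} since \(\ip\) is an identifier pattern. If \(P = P_1 \mid P_2\), then by \autoref{def:split-help} we get \(\spli^?(\ip, P) = (\spli^?(\spli_1(\ip), P_1), \spli^?(\spli_2(\ip), P_2))\); the induction hypothesis applied to \(P_1\) with pattern \(\spli_1(\ip)\) and to \(P_2\) with pattern \(\spli_2(\ip)\) gives that each component is a well-identified process, and the recomposition is therefore a well-identified process whose seed has shape \((\seed_1, \seed_2)\) matching the shape of \(P\).

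\medskip
For item 2, fix \(\seed\) with \(\seed \col P\) well-identified, and \(j \in \{1,2\}\). If \(P\) is not of the form \(P_1 \mid P_2\), then by \autoref{def:well-id-proc} the seed must be an identifier pattern \(\ip\); then \([\spli_j](\ip) = \spli_j(\ip)\) is again an identifier pattern (\autoref{def:split}), so \([\spli_j](\seed) \col P\) falls under the ``otherwise'' clause of the well-identification definition. If \(P = P_1 \mid P_2\), then well-identification forces \(\seed = (\seed_1, \seed_2)\) with each \(\seed_k \col P_k\) well-identified; by \autoref{def:seed}, \([\spli_j](\seed) = ([\spli_j](\seed_1), [\spli_j](\seed_2))\), and applying the induction hypothesis component-wise yields that \([\spli_j](\seed_k) \col P_k\) is well-identified for \(k \in \{1,2\}\), whence \([\spli_j](\seed) \col P\) is.

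\medskip
There is no real obstacle: the proof is a bookkeeping exercise showing that the three recursive definitions (splitter helper on processes, splitter extension on seeds, well-identification predicate) agree on when to recurse, so the induction goes through uniformly. The only subtle point is being careful that in item~2 the ``base'' of the recursion hinges on the \emph{well-identification} hypothesis forcing the seed to be a bare identifier pattern exactly when \(P\) is not a parallel composition; without that hypothesis the statement would fail.
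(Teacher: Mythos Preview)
Your proof is correct and follows essentially the same approach as the paper: a structural induction on the parallel-composition nesting of \(P\), using that \(\spli^?\), \([\spli_j]\), and the well-identification predicate all recurse along the same case split. The paper's proof is just a terser version of yours (it dispatches item~1 as ``immediate from the definitions'' and handles item~2 by the same case analysis on whether \(P\) is a parallel composition).
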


\begin{proof}
	\begin{enumerate}
		\item Immediate from Definitions~\ref{def:split-help} and \ref{def:well-id-proc}.
		\item If \(P\) is not of the form \(P_1 \mid P_2\), then it is immediate. Otherwise, \(\seed\) is of the form \((\seed_1, \seed_2)\) and so is \([\spli_j](\seed)\) by \autoref{eq-expanded} and it follows by induction.
	\end{enumerate}

\end{proof}

\begin{lemma}
	\label{lem:lts-well}
	The rules are well-formed.
\end{lemma}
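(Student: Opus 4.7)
The plan is to show, by case analysis on the rules of \autoref{fig:ltsrules}, that if the source of a transition is a well-identified process then so is the target, and moreover that any compatibility requirements on seeds in the premises are inherited by the conclusions. The key ingredients are already available: \autoref{lem:well-id}(1) guarantees that every invocation of \(\spli^?\) yields a well-identified process, and \autoref{lem:well-id}(2) guarantees that projections \([\spli_j]\) preserve well-identification, while \autoref{def:seed} ensures compatibility is preserved under \([\spli]\).

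First I would handle the rules that \enquote{uncover} threads, namely \textbf{act.}, \(+_{\Left}\), \(+_{\Right}\), \(\sqcap_{\Left}\), and \(\sqcap_{\Right}\). In all five cases the source is of the form \((c,s)\col R\) for some \(R\) and the target is \(\spli^?(c+s,s)\col R'\) where \(R'\) is a subterm of \(R\). Well-identification of the target then follows directly from \autoref{lem:well-id}(1), regardless of whether \(R'\) is of the form \(P_1 \mid P_2\) or not. Since the source has an atomic identifier pattern, no ambient compatibility condition has to be checked for these rules.

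Next I would handle the three rules of the parallel group. Here the source is \((\seed_1,\seed_2)\col P_1 \mid P_2\) with \(\seed_1 \compat \seed_2\), and by hypothesis \(\seed_j\col P_j\) is well-identified for \(j\in\{1,2\}\). In each rule the premise rewrites one of the \(\seed_j\col P_j\) into some \(\seed_j'\col P_j'\), so by induction each premise preserves well-identification. It remains to observe that the conclusion has the shape \((\seed_1',\seed_2')\col P_1'\mid P_2'\) and that \(\seed_1'\compat \seed_2'\) still holds: the splitter only refines patterns along compatible streams, so compatibility descends through \([\spli_j]\), and the side condition \(\seed_1 \compat \seed_2\) carried in the rule rules out any spurious identifier clash with the untouched side. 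Thus the target is well-identified.

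Finally, the rules \textbf{res.}, \(\ovee_{\Left}\), and \(\ovee_{\Right}\) leave the seed structure untouched aside from what is already handled by the premise, so they preserve well-identification immediately by the induction hypothesis. The main obstacle I anticipate is a purely bureaucratic one: making sure that in the \textbf{syn.} rule the pairing \(i_1 \oplus i_2\) really is well-defined, which reduces to checking \(i_1,i_2 \in \ids_a\); this follows since \(i_1\) and \(i_2\) originate from the atomic components of the respective premises by induction, using that \(\gamma\) ranges into \(\ids_a\). With these cases combined, every rule in \autoref{fig:ltsrules} preserves well-identification and compatibility, which is exactly the well-formedness claim.
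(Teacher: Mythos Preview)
Your argument is correct, but you are proving more than \autoref{lem:lts-well} asks. In the paper, ``well-formed'' has the narrow meaning that every seed written down in a rule---in particular the pairs \((\seed_1',\seed_2)\), \((\seed_1,\seed_2')\), \((\seed_1',\seed_2')\) in the parallel group---is a legitimate seed in the sense of \autoref{def:seed}, \ie its identifier patterns are pairwise compatible. The paper dispatches this in one line: the explicit side condition \(\seed_1 \compat \seed_2\) in each parallel rule, together with the fact that a transition on one side can only replace patterns by patterns whose streams are sub-streams of the original, makes compatibility of the resulting pair immediate. Your case analysis recovers exactly this, but you go further and also establish that the target is a \emph{well-identified process}; that is precisely the content of the next lemma, \autoref{lem:well-ident}, which the paper isolates as a separate statement. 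So your plan is not so much an alternative route as a merger of \autoref{lem:lts-well} and \autoref{lem:well-ident} into a single induction. Nothing is wrong with that---indeed it is arguably cleaner to do both at once---but be aware that the paper's claim here is the lighter one, and your observation about \(i_1,i_2 \in \ids_a\) in the \textnormal{syn.}\ rule, while correct, is already beyond what this lemma strictly needs.
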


\begin{proof}
	The only aspect to check is that all the seeds created contain only pairwise compatible identifier patterns, which is immediate.
\end{proof}

\begin{lemma}
	\label{lem:well-ident}
	If \(\seed\col P\) is a well-identified term and there exists a transition \(P \fwlts{i}{\lambda} \seed' \col P'\), then \(\seed'\col P'\) is well-identified.
\end{lemma}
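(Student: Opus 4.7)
The plan is to proceed by induction on the derivation of the transition $\seed \col P \fwlts{i}{\alpha} \seed' \col P'$, doing a case analysis on the last rule applied from \autoref{fig:ltsrules}. The rules split naturally into two categories: those that ``uncover'' fresh threads by consuming a single identifier pattern (namely act., $+_{\Left}$, $+_{\Right}$, $\sqcap_{\Left}$, $\sqcap_{\Right}$), and those that propagate a transition while preserving the seed's tree structure (the res., parallel group, and $\ovee$ rules). For the first category the conclusion always has the shape $\spli^?(c+s, s) \col P''$ for some subterm $P''$ of $P$, and well-identification follows immediately from \autoref{lem:well-id}(1).

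For the res. and $\ovee_{\{\Left, \Right\}}$ rules, the well-identification of the source $\seed \col P$ forces $\seed$ to be an identifier pattern (since $P \bs a$, $P \ovee Q$ are not of the form $P_1 \mid P_2$); the premise then yields $\seed \col P \fwlts{i}{\alpha} \seed' \col P'$ on a strictly smaller derivation, so the induction hypothesis provides that $\seed' \col P'$ is well-identified, and the conclusion is that same identified process (possibly under a restriction, which does not affect well-identification). The parallel group is slightly more involved: well-identification of $(\seed_1, \seed_2) \col P \mid Q$ gives well-identification of each $\seed_j \col P_j$; for $\mid_{\Left}$ and $\mid_{\Right}$, I apply the induction hypothesis to the single premise to obtain $\seed_1' \col P'$ (or $\seed_2' \col Q'$) well-identified, the untouched component stays well-identified, and \autoref{def:well-id-proc} reassembles the pair; for syn.\ I apply the IH to both premises and reassemble as above, noting that the substitutions $[i_1 \shortleftarrow i_1 \oplus i_2]$ affect only labels inside memory events—not yet present in the ILTS setting, so irrelevant here—and therefore leave well-identification untouched.

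The only technical point worth flagging is to confirm that compatibility of the seeds in the conclusion is preserved, which is precisely \autoref{lem:lts-well}; combining this with the structural matching above yields that the final process satisfies \autoref{def:well-id-proc}. I do not expect a hard obstacle: the definitions have been engineered so that the two invariants ``the seed's tree shape mirrors the parallel-composition tree of the process'' and ``all identifier patterns in a seed are pairwise compatible'' are separately preserved by each rule, respectively via the carefully placed $\spli^?$ calls and via \autoref{lem:lts-well}. The proof is therefore essentially a bookkeeping argument over the seven rule schemes of \autoref{fig:ltsrules}.
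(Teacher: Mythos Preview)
Your proposal is correct and is exactly the approach the paper takes: its entire proof reads ``By simple inspection of the rules and repeated use of \autoref{lem:well-id}.'' You have simply unpacked that inspection rule by rule, which is fine. One small point you leave implicit in the $\ovee$ case: when you invoke the induction hypothesis on the premise $\ip \col P \fwlts{i}{\alpha} \seed' \col P'$, you need $\ip \col P$ itself to be well-identified; this holds because if $P$ were a parallel composition then no rule could fire from $\ip \col P$ (the parallel rules demand a paired seed), so $P$ is not of the form $P_1 \mid P_2$ and $\ip \col P$ is trivially well-identified.
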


\begin{proof}
	By simple inspection of the rules and repeated use of \autoref{lem:well-id}.
\end{proof}

\lemunicity*

\begin{proof}
	It is immediate for all the rules except for the parallel group, where it amounts to see that since we only split seeds or \enquote{increment} them, if two seeds are compatible, then they can not have been obtained from seeds that were incompatible using split and increment, hence the two traces never used the same identifiers.
\end{proof}

\begin{lemma}[Identifiers substitution]
	\label{lem:substitution}
	For all \(\seed\), \(\seed'\), \(P\), \(P'\), \(i\) and \(\lambda\) such that \(\seed \col P \fwlts{i}{\lambda} \seed'\col P'\), then for all \(j \in \{1, 2\}\), there exists \(i'\), \(\seed''\) such that \([\spli_j](\seed) \col P \fwlts{i'}{\lambda} \seed''\col P'\).
\end{lemma}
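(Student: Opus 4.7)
The plan is to proceed by structural induction on the derivation of the transition \(\seed \col P \fwlts{i}{\lambda} \seed' \col P'\), showing at each stage that applying \([\spli_j]\) to the source seed yields a derivation of the same shape, concluding with the same target process \(P'\) and same label \(\lambda\), but generally producing a fresh identifier \(i'\) and residual seed \(\seed''\) obtained from the projected pattern(s).

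The base cases are the four axiom rules act., \(+_{\Left}\), \(+_{\Right}\), \(\sqcap_{\Left}\), \(\sqcap_{\Right}\). In each of these, the source seed must be an atomic identifier pattern \(\ip=(c,s)\), and the target seed is \(\spli^?(c+s,s)\col P'\) (for some explicit \(P'\) determined by the rule). Applying \([\spli_j]\) to \(\ip\) simply yields \(\spli_j(\ip) = (c'',s'')\), which is again an identifier pattern by \autoref{def:split}; we may therefore re-apply the very same rule to obtain \((c'',s'')\col P \fwlts{\gamma(c'')}{\lambda} \spli^?(c''+s'',s'')\col P'\), setting \(i'\defeq \gamma(c'')\) and \(\seed''\defeq \spli^?(c''+s'',s'')\).

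For the inductive cases we split by the last rule. For res. and the two \(\ovee\) rules, the seed flows unchanged through the rule, so \([\spli_j](\seed)\) also flows unchanged and we obtain the required transition by directly invoking the induction hypothesis on the premise. For the parallel group, the source seed is always a pair \((\seed_1,\seed_2)\), and by \autoref{def:seed}, \([\spli_j](\seed_1,\seed_2) = ([\spli_j](\seed_1), [\spli_j](\seed_2))\), which preserves the pair structure needed to fire \(\mid_{\Left}\), \(\mid_{\Right}\) or syn. In the asymmetric rules \(\mid_{\Left}\) and \(\mid_{\Right}\), we invoke the IH on the active branch; in syn. we invoke the IH on both premises, producing two new identifiers \(i_1'\) and \(i_2'\) whose pairing \(i_1'\oplus i_2'\) serves as the new \(i'\).

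The main obstacle---really the only nontrivial point---is maintaining the compatibility side-conditions (\(\seed_1\compat\seed_2\), and the corresponding implicit compatibility in syn.) after the projection. This reduces to showing that \([\spli_j]\) preserves the property \enquote{all identifier patterns in the resulting object are pairwise compatible}, which is precisely the content of the \enquote{seed splitter is well-defined} lemma proved in \autoref{sec:app-iden}, combined with the observation that \([\spli_j]\) projects a seed to a seed. Consequently the side-conditions propagate automatically through the induction, and the resulting derivation tree has the same shape as the original, certifying existence of the required \(i'\) and \(\seed''\).
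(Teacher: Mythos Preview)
Your proposal is correct and follows essentially the same approach as the paper: induction on the derivation tree, replacing the seed at the leaves by its \([\spli_j]\)-image and propagating down using the fact that \([\spli_j]\) distributes over pairs (the paper cites \autoref{eq-expanded} for this). Your version is simply more detailed---in particular you explicitly discharge the compatibility side-conditions in the parallel group via the well-definedness of the seed splitter, a point the paper's two-line proof leaves implicit.
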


\begin{proof}
	By induction on the height of the derivation tree of \(\seed \col P \fwlts{i}{\lambda} \seed'\col P\).
	It is always possible to replace \(\seed\) with \(\spli_j(\seed)\) in all the leaves, and by \autoref{eq-expanded} this substitution can be propagated down until we reach the conclusion.
\end{proof}

\lmccsidentified*

\begin{proof}
	\begin{description}
		\item[\(\Rightarrow\)] By induction on the height of the derivation tree. The only condition that can possibly be blocking is in the parallel group, but it can be side-stepped using substitution and \autoref{lem:substitution}: if \(\seed_1 \compat \seed_2\) does not hold, then replace them with \([\spli_1](\seed_1)\) and \([\spli_2](\seed_1)\), which are compatible by definition, and propagate that substitution.
		\item[\(\Leftarrow\)] Trivial, as it suffices to remove the seed and the identifier to get CCS labeled transition system.
	\end{description}
\end{proof}

\subsection{Proofs for \autoref{sec:rever-prop} -- \nameref*{sec:rever-prop}}

\begin{lemma}\label{lm:ccsrev}
	For all CCS process \(P\), \(\exists \seed\) \st \(P \redl{\alpha_1} \cdots \redl{\alpha_n} P' \Leftrightarrow (%
	\seed \col m \rhd P \fwlts{i_1}{\alpha_1} \cdots \fwlts{i_n}{\alpha_n} \seed' \col m \rhd P',\text{ with }\seed \col m \rhd P \text{ initial})\).
\end{lemma}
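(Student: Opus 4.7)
The plan is to piggy-back on \autoref{lm:ccsidentified} and simply observe that the memory component is \enquote{passive} in the forward fragment: the forward rules of \autoref{fig:irltsrules} never inspect a memory to decide whether they fire, they only append, duplicate, or substitute within it. So the forward IRLTS, once we strip the seed and the memory, is literally the ILTS of \autoref{fig:ltsrules}, and by \autoref{lm:ccsidentified} the latter coincides with CCS up to a choice of initial seed.

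Concretely, for the \(\Rightarrow\) direction I would proceed by induction on \(n\). The base case \(n=0\) is handled by picking any well-identified seed \(\seed\) and the empty memory (or a pair of empty memories, obtained via \autoref{def:dup}, so that \(\seed\col m\rhd P\) is initial in the sense of \autoref{def:initial}). For the inductive step, given \(P\redl{\alpha_1}P_1\) I apply \autoref{lm:ccsidentified} to produce \(\seed\col P\fwlts{i_1}{\alpha_1}\seed_1\col P_1\); then I lift this transition to a transition \(\seed\col m\rhd P\fwlts{i_1}{\alpha_1}\seed_1\col m_1\rhd P_1\) by mirroring the derivation tree, rule by rule, and letting each rule perform its prescribed memory update (\(\concat_j\), \([\cdot\shortleftarrow\cdot]\), \(\dup\)). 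Since these memory updates are fully determined by the rule and its inputs and are never used as side-conditions on the forward side, the shape of the derivation tree is preserved and the induction hypothesis applies to the remaining trace.

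For the \(\Leftarrow\) direction the work is trivial: erase the seed, the identifier, and the memory component at every node of the derivation tree, and what remains is a valid CCS derivation---again because no forward rule has a side-condition depending on the erased data. The condition \enquote{\(\seed\col m\rhd P\) initial} is only needed to guarantee that \(\seed\col P\) is well-identified at the starting point and that memories are empty on atomic threads, which is precisely what allows the \autoref{lm:ccsidentified} argument to kick off.

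The only step that requires a little care is the parallel group, where the forward synchronisation rule carries the compatibility side-condition \(\seed_1\compat\seed_2\) inherited from the identified setting. This is handled exactly as in the proof of \autoref{lm:ccsidentified}: if compatibility fails in a naive lifting, I replace the offending seeds by \([\spli_1](\seed)\) and \([\spli_2](\seed)\) and propagate the substitution through the derivation using \autoref{lem:substitution}, which in turn forces the corresponding substitutions in memory events via \autoref{def:op-on-mem}. I expect this to be the only non-routine bookkeeping; everything else follows from the observation that forward reversible rules are conservative extensions of the ILTS rules with a deterministic memory bookkeeping on top.
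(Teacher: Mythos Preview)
Your proposal is correct and follows essentially the same approach as the paper: the paper's proof is a three-line remark that the reasoning is \enquote{similar as for \autoref{lm:ccsidentified}} and that \enquote{the addition of the memory is not giving any constraint to the forward executions}, which is precisely the observation you articulate and then flesh out. Your more detailed treatment of the parallel-group compatibility and the explicit induction on \(n\) are welcome elaborations, but they do not diverge from the paper's intended argument.
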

\begin{proof}
	The reasoning is similar as for~\autoref{lm:ccsidentified}. The addition of the memory is not giving any constraint to the forward executions. Notable that we are limited on reachable processes.
\end{proof}

\begin{lemma}[Loop Lemma]
	\label{lm:loop}
	For every reachable process \(R\) and forward transition \(t :R\fwlts{i}{\alpha}R'\) there exists a backward transition \(t\rev :R'\bwlts{i}{\alpha}R\) and vice versa.
\end{lemma}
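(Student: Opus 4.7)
The plan is to proceed by structural induction on the derivation tree of the transition, treating the two directions of the iff separately but symmetrically. The rules of \autoref{fig:irltsrules} and \autoref{fig:birltsrules} have been carefully designed as mirror images of each other, so for each forward rule I would identify the unique backward rule that undoes it on the target configuration, and vice versa. Throughout, I would rely on the fact that $R$ (resp.\ $R'$) is reachable, which guarantees well-identifiedness and the ``matching'' shape of seeds and memories needed for the backward rules to apply.

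\textbf{Base cases.} For act., I would verify directly that applying the forward rule to $(c,s) \col m \rhd \lambda.P$ produces the identified process $\spli^?(c+s,s) \col \dup(\mem{\gamma(c),\lambda,\Null}.m) \rhd P$ with identifier $i = \gamma(c)$, and that the backward act.\ rule applied to this result, with $\gamma^{-1}(i) = c$, returns exactly $(c,s) \col m \rhd \lambda.P$ with the same identifier. The cases $+_{\Left}, +_{\Right}, \sqcap_{\Left}, \sqcap_{\Right}$ are handled identically, using the symmetric memory event that records the discarded branch.

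\textbf{Inductive cases.} For res., $|_{\Left}$, $|_{\Right}$, $\ovee_{\Left}$, $\ovee_{\Right}$, I would apply the induction hypothesis to the unique sub-derivation to obtain a backward sub-transition, and then wrap it in the mirror backward rule. The $\ovee$ cases additionally require that the insertion $m' \concat_i (\ovee, Q, \Right)$ is inverted by erasure at identifier $i$ during the backward reading, which is exactly the definition of $\concat_i$ in \autoref{def:op-on-mem}.

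\textbf{Main obstacle.} The interesting case is syn., where two subtleties arise. First, the forward rule applies substitutions $[i_1 \shortleftarrow i_1 \oplus i_2]$ and $[i_2 \shortleftarrow i_2 \oplus i_1]$ to the sub-memories, and the backward syn.\ requires the inverse substitutions on its premises; I would show that these compose to the identity because $i_1 \oplus i_2 \in \ids_p$ is freshly produced by $\oplus$ at this very step and therefore cannot already occur in $m_1'$ or $m_2'$ (identifiers in these memories came either from the pre-existing $m_1$, $m_2$ of the reachable process, or from atomic identifiers drawn from $\seed_1$, $\seed_2$ during the sub-derivations). Second, the side conditions $i_1 \notin m_2'$ and $i_2 \notin m_1'$ must be verified: these hold because $\seed_1 \compat \seed_2$ implies that the streams $\idst(\seed_1)$ and $\idst(\seed_2)$ are disjoint, so no identifier drawn from one side can appear in the memory of the other. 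Once this case is settled, the reverse implication ``$R' \bwlts{i}{\alpha} R$ implies $R \fwlts{i}{\alpha} R'$'' is obtained by the same structural induction run on the backward derivation, each backward rule admitting the same forward rule as its mirror by an entirely analogous case analysis.
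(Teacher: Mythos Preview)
Your proposal is correct and follows exactly the paper's approach: induction on the derivation of the transition, relying on the mirror-image design of the forward and backward rules together with reachability. Your write-up is in fact considerably more detailed than the paper's own proof, which simply cites symmetry and reachability without spelling out the \textnormal{syn.}\ case; note only that the same freshness-by-compatibility argument you give for the side conditions of \textnormal{syn.}\ is also needed for the $i \notin m_2$ side condition of the backward $\mid_{\Left}$ rule, which you listed among the ``straightforward'' inductive cases.
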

\begin{proof}
	Given a transition \(t :R\fwlts{i}{\alpha}R'\),
	the proof is done by the induction on the derivation of transition \(t\). The statement of the lemma follows from the fact that forward and backward rules are symmetric.

	Given a transition \(t\rev :R'\bwlts{i}{\alpha}R\) the proof is done by the induction on the derivation of transition \(t\rev\), where the statement of the lemma follows from the fact that process \(R\) is reachable and that forward and backward rules are symmetric.
\end{proof}

\begin{lemma}[Square Property] \label{lm:square}
	If \(t_1 :R\fbwlts{i_1}{\alpha_1}R_1\) and \(t_2:R\fbwlts{i_2}{\alpha_2}R_2\) are two different coinitial concurrent transitions, there exist two cofinal transitions \(t'_2 :R_1\fbwlts{i_2}{\alpha_2}R_3\) and \(t'_1 :R_2\fbwlts{i_1}{\alpha_1}R_3\).
\end{lemma}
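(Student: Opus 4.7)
The plan is to case split on the directions of $t_1$ and $t_2$, since each of the three combinations---forward-forward, forward-backward, backward-backward---is governed by a different clause of \autoref{def:concurrency}. In every case I proceed by induction on the derivation of $t_1$, with symmetric treatment of $t_2$, focusing on the action rule, the parallel group, and the sum rules that touch the memory via $\concat_i$, $\dup$, and the splitter helper $\spli^?$. The Loop Lemma (\autoref{lm:loop}) lets me freely trade a needed backward transition for the reverse of an established forward one, which shortens several subcases. In the forward-forward case, compatibility of $i_1$ and $i_2$ forces the two transitions to draw their identifiers from disjoint patterns of the seed; when one is a synchronisation with identifier $i_1^1 \oplus i_1^2$, the fourth clause of \autoref{def:compatible-ident} rules out any overlap with the components used by the other. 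Disjointness of supporting patterns forces $t_1$ and $t_2$ to be derived on structurally disjoint subterms of the parallel composition, so the square closes by re-applying each other's rule on the appropriate side of the corresponding $\mid_{\Left}$, $\mid_{\Right}$, or synchronisation derivation.

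The forward-backward case is the delicate one, and I expect it to be the main obstacle. Let $t_2$ be the backward transition with witness pattern(s) $\ip_{t_2}$ (or $\ip_{t_2}^1$, $\ip_{t_2}^2$ for a synchronisation). The downstream condition in \autoref{def:concurrency} ensures that $i_1$ is not drawn from $\ip_{t_2}$, so the seed region restored by firing $t_2$ is disjoint from the one supporting $t_1$; hence $t_1$ remains enabled after $t_2$ and yields the same identifier. Symmetrically, $t_2$ pops the top memory event of its target stack, and that event cannot have been pushed by $t_1$ (otherwise $i_1$ would be downstream of $\ip_{t_2}$), so firing $t_1$ does not alter the top of the stack that $t_2$ targets, leaving $t_2$ enabled. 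The delicate bookkeeping concerns how $\dup$, which may duplicate a memory event across the two sides of a subsequent parallel, and $\concat_i$, which destructively edits the unique event whose identifier equals $i$, interact with the downstream check. One must verify that a forward action firing inside one side of a duplicated memory does not silently modify the other side, and that no $\concat_i$ invoked by $t_1$ can touch the event $t_2$ is about to pop; both reduce to local syntactic checks once one observes that $\concat_i$ only acts on events with identifier $i$ and that $\dup$-produced memory pairs are only ever accessed in lockstep through the parallel rules.

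The backward-backward case follows by showing that two different coinitial backward transitions always pop events sitting at the tops of distinct memory stacks inside $R$. A same-stack clash is impossible: at most one backward rule applies to any given top event, so two different backward transitions cannot both pop the same event. A clash in which one transition pops one side of a synchronisation while the other pops the partner is also ruled out, since a synchronised event carries a paired identifier and is always consumed by the backward synchronisation rule as a single step. Hence the two popped events live in structurally disjoint threads, and the square closes by swapping the order of the two backward derivations. In every case the two cofinal processes coincide definitionally, because each rule modifies only the seed, memory slot, and thread acted upon, and these are disjoint by the concurrency hypothesis.
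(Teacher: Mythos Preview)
Your proposal is correct and follows essentially the same route as the paper: both arguments reduce to the observation that the identifier-based definition of concurrency forces \(t_1\) and \(t_2\) to act on distinct parallel components of \(R\), whose seeds and memory slots are independent, so the square closes by re-firing each transition on the untouched side. The paper compresses this into a single paragraph without case-splitting on directions, whereas you spell out the three direction combinations and the bookkeeping for \(\dup\) and \(\concat_i\); the extra detail is sound but goes beyond what the paper supplies.
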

\begin{proof} Given a process \(R=\seed\circ m\rhd P\) and two different coinitial concurrent transitions \(t_1\) and \(t_2\), by~\autoref{def:concurrency}, CCS process \(P\) has at least one parallel operator on the top-level and transitions are executed on different components in parallel, let us denote them \(Q_1\) and \(Q_2\) with their corresponding seeds and memories \(\seed_1,m_1\) and \(\seed_2,m_2\), respectively (it is noted that \(\seed_1,\seed_2\in \seed\) and \(m_1,m_2\in m\)). Since transitions are concurrent and coinitial, after the execution of transition \(t_1\), seed and memory \(\seed_2,m_2\) are not changed and transition \(t'_2\) can take place producing the process \(R_3\). Similar if we consider transition \(t_2\).

\end{proof}

\lmaxioms*
\begin{proof}
	The proofs of Loop Lemma and Square Lemma follow from~\autoref{lm:loop} and~\ref{lm:square}, respectively.

	Backward transitions are concurrent is a direct consequence of the definition of concurrency (\autoref{def:concurrency}).

	Well-foundedness follows from the fact that backward computation consumes memory.
	When the memory does not contain any memory event anymore, backward transitions are impossible.
\end{proof}

\lemunicityir*

\begin{proof}
	It follows from \autoref{thm:causal} and \autoref{lm:ccsidentified}: every identifier occurring two times or more in \(d\) must occur in forward and backward transitions, and causal consistency allows to replace any pair of transitions with the same identifier with the empty trace to obtain \(d'\).
	Hence, every atomic identifier will occur at most once in \(d'\).

	For paired identifiers, the reasoning is similar, and uses furthermore that any transition identified with \(i_1 \oplus i_2\) forbid any other transition identified by \(i_1\) or \(i_2\) to take place by definition of concurrency, and reciprocally.
\end{proof}

\subsection{Details on the Encodings to RCCS and CCSK}
\label{sec:app-translation}

To translate a IRLTS process \(\seed \col m \rhd P\) into an RCCS or CCSK process, we assume that
\begin{enumerate}
	\item The process \(\seed\circ m\rhd P\) is reachable,
	\item All identifiers in \(\ids_p\) occuring in \(m\) have been replaced by one of their component uniformly, \ie \(i_1 \oplus i_2\) and \(i_2 \oplus i_1\) have both been replaced with, say, \(i_1\),
	\item The operators \(\ovee\) and \(\sqcap\) do not occur in \(P\) or \(m\), and this latter furthermore has no occurence of \(\upsilon\),
	\item All memory events of the form \(\mem{i,\lambda_1,(+, \lambda_2.P_2, \Right)}\) have been replaced with \(\mem{i,\lambda_1, \lambda_2.P_2}\), and similarly for \(\Left\): as the sum operator \(+\) is commutative in RCCS and CCSK, the side on which the process was do not matter anymore, and since there is only one sum available, there is no need to record which one was used.
\end{enumerate}

Both encodings are done with the help of the auxiliary function \(\zip\) defined bellow.
During the encoding we let \([m',m''].\memstack\) be a valid memory, where \(\memstack\) is a stack and could be \(\emptymem\).

\begin{definition}[\(\zip\) function]
	The function \(\zip\) is defined on the memory \(m\) of a reachable process \(\seed\circ m\rhd P\) as
	\begin{align}
		 & \zip([m',m''])=\zip([\zip(m'),\zip(m'')]) \label{zip1} \tag{z1}             \\
		 & \zip([m'.\memstack,m'',\memstack])=[m',m''].\memstack \label{zip2} \tag{z2} \\
		 & \zip(\memstack)=\memstack \label{zip3} \tag{z3}
	\end{align}
\end{definition}

Stated differently, \autoref{zip1} makes sure that \(\zip\) is applied to every pair and stack in \(m\), then \autoref{zip2} \enquote{zips} the common parts of pairs, and \autoref{zip3} leaves the stack unchanged.
After the application of the function \(\zip\), every memory can be written as \([m',m''].\memstack\),
and we let \(\emptymem. \memstack=\memstack\).

\begin{example} \label{ex:zip}
	Let us consider the process \(R=\seed\circ[[\mem{0,a,\Null},\mem{4,c,\Null}.\mem{0,a,\Null}],\mem{0,a,\Null}] \rhd (b\mid 0) \mid d\).
	Applying \(\zip\) gives:
	\begin{align*}
		  & \zip([[\mem{0,a,\Null},\mem{4,c,\Null}.\mem{0,a,\Null}],\mem{0,a,\Null}] )                  \\
		= & \zip([\zip([\mem{0,a,\Null},\mem{4,c,\Null}.\mem{0,a,\Null}]),\zip(\mem{0,a,\Null})])       \\
		= & \zip([\zip([\zip(\mem{0,a,\Null}),\zip(\mem{4,c,\Null}.\mem{0,a,\Null})]),\mem{0,a,\Null}]) \\
		= & \zip([\zip([\mem{0,a,\Null},\mem{4,c,\Null}.\mem{0,a,\Null}]),\mem{0,a,\Null}])             \\
		= & \zip([[\emptyset,\mem{4,c,\Null}].\mem{0,a,\Null},\mem{0,a,\Null}])                         \\
		= & [[\emptyset,\mem{4,c,\Null}],\emptyset].\mem{0,a,\Null}
	\end{align*}
\end{example}

\subsubsection{Encoding to RCCS}

The encoding function \(\encm{\cdot} :\rproc\rightarrow \rproc_{\text{RCCS}}\) is defined inductively:
\begin{align}
	\encm{\seed\circ m\rhd P}             & =\encm{\zip (m)\rhd P} \tag{\(\encm{\cdot}_1\)} \label{1}                                                    \\
	\encm{[m',m''].\memstack\rhd P\mid Q} & =\encm{m'.\fork . \memstack \rhd P}\mid \encm{m''.\fork . \memstack\rhd Q} \tag{\(\encm{\cdot}_2\)}\label{2} \\
	\encm{\memstack\rhd P}                & =\memstack\rhd P \tag{\(\encm{\cdot}_3\)}\label{3}
\end{align}

With rule \autoref{1} the identifier mechanism is removed and \(\zip\) is applied to the memory.
Rule (\ref{2}) splits the encoding and annotates the memory with RCCS's \enquote{fork symbol} \(\fork\), while rule (\ref{3}), with the condition that \(\memstack\) is a stack, produce the final RCCS thread.
Note that, as is done in RCCS, the \enquote{\(\mid\)} symbol is used as a constructor for both CCS and RCCS threads, and that our encoding maximally apply the \enquote{distribution of memory} rule of their structural congruence~\cite[p.~297]{Danos2004}.

\begin{example}
	Let us consider the process \(R=\seed\circ[[\mem{0,a,\Null},\mem{4,c,\Null}.\mem{0,a,\Null}],\mem{0,a,\Null}] \rhd (b\mid 0)\mid d\) and its memory \enquote{zipping} from~\autoref{ex:zip}, applying the encoding we just defined gives:
	\begin{align*}
		\encm{R} & =\encm{ [[\emptyset,\mem{4,c,\Null}],\emptyset].\mem{0,a,\Null}\rhd (b\mid 0)\mid d}                                                        \\
		         & = \encm{ [\emptyset,\mem{4,c,\Null}].\fork.\mem{0,a,\Null}\rhd b\mid 0}\mid \encm{\fork.\mem{0,a,\Null} \rhd d}                             \\
		         & = (\encm{ \fork.\fork.\mem{0,a,\Null}\rhd b}\mid \encm{\mem{4,c,\Null}.\fork.\fork.\mem{0,a,\Null}\rhd 0})\mid \fork.\mem{0,a,\Null} \rhd d \\
		         & = (\fork.\fork.\mem{0,a,\Null}\rhd b\mid \mem{4,c,\Null}.\fork.\fork.\mem{0,a,\Null}\rhd 0) \mid \fork.\mem{0,a,\Null} \rhd d
	\end{align*}
\end{example}

\subsubsection{Encoding to CCSK}

We write \(X\) for CCSK processes and let \(0.X=X\).

The encoding function \(\encmp{\cdot} :\rproc\rightarrow \rproc_{\text{CCSK}}\) is defined inductively:
\begin{align}
	\encmp{\seed\circ m\rhd P}                & =\encmp{\zip (m), P} \tag{\(\encmp{\cdot}_1\)} \label{ccsk1}                                   \\
	\encmp{\emptymem, X}                      & =X \tag{\(\encmp{\cdot}_2\)} \label{ccsk2}                                                     \\
	\encmp{[m',m''].\memstack, P\mid Q}       & =\encmp{[\encmp{m',P},\encmp{m'',Q}].\memstack, \Null} \tag{\(\encmp{\cdot}_3\)} \label{ccsk3} \\
	\encmp{[X_1,\ldots,X_n],\Null}            & =X_1,\ldots,X_n \tag{\(\encmp{\cdot}_4\)} \label{ccsk4}                                        \\
	\encmp{[X_1,\ldots,X_n].\memstack, \Null} & =\encmp{\memstack, X_1\mid \ldots\mid X_n} \tag{\(\encmp{\cdot}_5\)} \label{ccsk5}             \\
	\encmp{\mem{i,\lambda,Q}.\memstack,X}     & =\encmp{\memstack,\lambda[i].P+Q} \tag{\(\encmp{\cdot}_6\)} \label{ccsk6}                      \\
	\encmp{\mem{i,\lambda,\Null}.\memstack,X} & =\encmp{\memstack,\lambda[i].P} \tag{\(\encmp{\cdot}_7\)} \label{ccsk7}
\end{align}

Rule \autoref{ccsk1} removes the identifier mechanism, applies \(\zip\) and separates the process from its memory.
The encoding terminates with \autoref{ccsk2} when the memory part is empty.
With rule \autoref{ccsk3}, the encoding can \enquote{enter} into the memory pair and translate it, while bringing the CCS process inside and leaving the empty space \(\Null\) as the starting encoding step.
When the internal encodings are finished, rules \autoref{ccsk4} and \autoref{ccsk5} fill the empty space \(\Null\) with obtained processes \(X_1,\ldots,X_n\) and compose them in parallel. Then the encoding continues by translating the memory \(\memstack\) or finishes if \(\memstack=\emptymem\).
With the last two rules, memory events for prefix and sum are translated into history prefixes of CCSK processes.

\begin{example}
	Let us consider the process \(R=\seed\circ[[\mem{0,a,\Null},\mem{4,c,\Null}.\mem{0,a,\Null}],\mem{0,a,\Null}] \rhd (b\mid 0)\mid d\) and its memory \enquote{zipping} from~\autoref{ex:zip}, applying the encoding we just defined gives:
	\begin{align*}
		\encmp{R}= & \encmp{ [[\emptyset,\mem{4,c,\Null}],\emptyset].\mem{0,a,\Null}, b\mid 0\mid d}                   \\
		=          & \encmp{ [\encmp{[\emptyset,\mem{4,c,\Null}],b\mid 0},\encmp{\emptyset,d}].\mem{0,a,\Null}, \Null} \\
		=          & \encmp{ [\encmp{[\encmp{\emptyset,b},\encmp{\mem{4,c,\Null},0}],\Null},d].\mem{0,a,\Null}, \Null} \\
		=          & \encmp{ [\encmp{[b,\encmp{\emptyset,c[4].0}],\Null},d].\mem{0,a,\Null}, \Null}                    \\
		=          & \encmp{ [\encmp{[b,c[4].0],\Null},d].\mem{0,a,\Null}, \Null}                                      \\
		=          & \encmp{ [b,c[4].0,d].\mem{0,a,\Null}, \Null}                                                      \\
		=          & \encmp{ \mem{0,a,\Null}, b\mid c[4].0 \mid d}                                                     \\
		=          & \encmp{\emptyset, a[0].(b\mid c[4].0 \mid d)}                                                     \\
		=          & a[0].(b\mid c[4].0 \mid d)
	\end{align*}
\end{example}

\subsection{Back-and-forth-bisimulations and Proofs for \autoref{sec:context}}
\label{sec:bf}

Below, assume given two reachable processes \(R_1\) and \(R_2\), and if \(f : A \to B\) is such that \(f(a) = b\), we write \(f \setminus \{a \mapsto b\}\) for \(f \frestr{A{\setminus}\{a\}}\) and \(f \cup \{a \mapsto b\}\) for the function defined as \(f\) on \(A\) that additionally maps \(a \notin A\) to \(b\).
We also let \(\ids(R)\) be the set of identifiers occurring in the memory of \(R\).

\begin{definition}[\textnormal{B\&F} and \textnormal{SB\&F} bisimulations~\cite{Aubert2020b}]
	\label{def:hhpb_ident}
	A relation \(\rel\subseteq \rproc \times \rproc \times (\ids \rightharpoonup \ids)\) such that \((\emptymem\rhd \orig{R_1}, \emptymem\rhd \orig{R_2}, \emptymem)\in\rel\) and if \((R_1, R_2, f) \in \rel\), then \(f\) is a bijection between \(\ids(R_1)\) and \(\ids(R_2)\)
	and (\ref{bf1}--\ref{bf4}) hold is called \emph{a back-and-forth-bisimulation (\BF) between \(R_1\) and \(R_2\)}.
	\begin{align}
		\forall S_1, R_1\fwlts{i}{\alpha}S_1 \Rightarrow
		\exists S_2, g, R_2\fwlts{j}{\alpha}S_2, g = f\cup\{i \mapsto j\}, (S_1, S_2, g)\in\rel \label{bf1}        \\
		\forall S_2, R_2\fwlts{i}{\alpha}S_2 \Rightarrow
		\exists S_1, g, R_1\fwlts{j}{\alpha}S_1, g = f\cup\{i \mapsto j\}, (S_1, S_2, g)\in\rel \label{bf2}        \\
		\forall S_1, R_1\bwlts{i}{\alpha}S_1 \Rightarrow
		\exists S_2, f, R_2\bwlts{j}{\alpha}S_2, g = f{\setminus}\{i \mapsto j\}, (S_1, S_2, g)\in\rel \label{bf3} \\
		\forall S_2, R_2\bwlts{i}{\alpha}S_2 \Rightarrow
		\exists S_1, g, R_1\bwlts{j}{\alpha}S_1, g = f{\setminus}\{i \mapsto j\}, (S_1, S_2, g)\in\rel \label{bf4}
	\end{align}
	If we remove the requirements on \(f\) and \(g\) in the second part of (\ref{bf1}--\ref{bf4}), we call \(\rel\) a \emph{simple back-and-forth bisimulation (\SBF)}.
	We write that \emph{\(R_1\) and \(R_2\) are \BF} (\resp \SBF) if there exists a \BF (\resp \SBF) relation between them.
\end{definition}

\lemcont*

\begin{proof}
	As \(R\) is not initial, it is of the form \(e.\emptymem \rhd P\) for some memory event \(e\) (we assume that the memory is a stack with only one event, but the proof is the same if it is a pair or contains more than one event), and its origin is of the form \(a.P'\) (similarly, the proof can easily be adapted for processes that do not have a prefix as the main connector of their origin).
	Taking \(b\) to be a name not occuring in \(\orig{R}\) and letting \(\rcont{\cdot} = \cdot \rhd \cdot + b\) makes \(\orig{\rcont{R}}\) to be of the form \(a.(P'+b)\).
	As a term context cannot insert an occurence of \(b\) \enquote{under} the prefix \(a\), the only possible option is to use a context of the form \(\cont{\cdot} = \cdot + a.b\).
	But \(a.(P'+b)\), after a transition on \(a\), can still decide between \(P'\) and \(b\) while \(\cont{\orig{R}}\) cannot: hence, \(\cont{\orig{R}}\) and \(\orig{\rcont{R}}\) are not \BF.
\end{proof}

\begin{filecontents*}{standalone.bib}

	@article{Matthews2021,
		author =        {David Matthews},
		journal =       {Nature},
		month =         mar,
		number =        {7848},
		pages =         {166--167},
		publisher =     {Springer Science and Business Media {LLC}},
		title =         {How to get started in quantum computing},
		volume =        {591},
		year =          {2021},
		doi =           {10.1038/d41586-021-00533-x},
	}
	
	@inproceedings{Frank2020,
		author =        {Michael P. Frank and Robert W. Brocato and
			Brian D. Tierney and Nancy A. Missert and
			Alexander H. Hsia},
		booktitle =     {{ICRC} 2020, Atlanta, GA, USA, December 1-3, 2020},
		pages =         {1--8},
		publisher =     {{IEEE}},
		title =         {Reversible Computing with Fast, Fully Static, Fully
			Adiabatic {CMOS}},
		year =          {2020},
		doi =           {10.1109/ICRC2020.2020.00014},
		isbn =          {978-0-7381-4337-8},
	}
	
	@article{Perdrix2006,
		author =        {Simon Perdrix and Philippe Jorrand},
		journal =       {Electron.\ Notes Theor.\ Comput.\ Sci.},
		number =        {3},
		pages =         {119--128},
		title =         {Classically-controlled Quantum Computation},
		volume =        {135},
		year =          {2006},
		doi =           {10.1016/j.entcs.2005.09.026},
	}
	
	@book{Milner1980,
		author =        {Milner, Robin},
		publisher =     {Springer-Verlag},
		series =        {LNCS},
		title =         {A Calculus of Communicating Systems},
		year =          {1980},
		doi =           {10.1007/3-540-10235-3},
		isbn =          {9783540102359},
	}
	
	@book{Sangiorgi2001,
		author =        {Sangiorgi, Davide and Walker, David},
		pages =         {I-XII, 1--580},
		publisher =     {CUP},
		title =         {The Pi-calculus},
		year =          {2001},
		isbn =          {978-0-521-78177-0},
	}
	
	@article{Zappa2005,
		author =        {Merro, Massimo and Zappa Nardelli, Francesco},
		journal =       {J.\ ACM},
		number =        {6},
		pages =         {961--1023},
		title =         {Behavioral theory for mobile ambients},
		volume =        {52},
		year =          {2005},
		doi =           {10.1145/1101821.1101825},
	}
	
	@article{Abadi2018,
		author =        {Abadi, Mart{\'\i}n and Blanchet, Bruno and
			Fournet, C{\'{e}}dric},
		journal =       {J.\ ACM},
		number =        {1},
		pages =         {1:1--1:41},
		title =         {The Applied Pi Calculus: Mobile Values, New Names,
			and Secure Communication},
		volume =        {65},
		year =          {2018},
		doi =           {10.1145/3127586},
	}
	
	@book{Hennessy2007,
		author =        {Hennessy, Matthew},
		publisher =     {CUP},
		title =         {A distributed Pi-calculus},
		year =          {2007},
		doi =           {10.1017/CBO9780511611063},
		isbn =          {0-521-87330-4},
	}
	
	@inproceedings{Danos2004,
		author =        {Danos, Vincent and Krivine, Jean},
		booktitle =     {CONCUR},
		editor =        {Gardner, Philippa and Nobuko Yoshida},
		pages =         {292--307},
		publisher =     {Springer},
		series =        {LNCS},
		title =         {Reversible Communicating Systems},
		volume =        {3170},
		year =          {2004},
		doi =           {10.1007/978-3-540-28644-8_19},
		isbn =          {3-540-22940-X},
	}
	
	@inproceedings{Phillips2006,
		author =        {Phillips, Iain and Ulidowski, Irek},
		booktitle =     {FoSSaCS},
		editor =        {Aceto, Luca and Ing{\'{o}}lfsd{\'{o}}ttir, Anna},
		pages =         {246--260},
		publisher =     {Springer},
		series =        {LNCS},
		title =         {Reversing Algebraic Process Calculi},
		volume =        {3921},
		year =          {2006},
		doi =           {10.1007/11690634_17},
		isbn =          {3-540-33045-3},
	}
	
	@article{Lanese2019,
		author =        {Ivan Lanese and Medić, Doriana and
			Claudio Antares Mezzina},
		journal =       {Acta Inform.},
		month =         nov,
		publisher =     {Springer Science and Business Media {LLC}},
		title =         {Static versus dynamic reversibility in {CCS}},
		year =          {2019},
		doi =           {10.1007/s00236-019-00346-6},
	}
	
	@inproceedings{Arpit2017,
		address =       {New York, NY, USA},
		author =        {Arpit and Kumar, Divya},
		booktitle =     {ICCCT},
		pages =         {34–40},
		publisher =     {ACM},
		series =        {ICCCT-2017},
		title =         {Calculus of Concurrent Probabilistic Reversible
			Processes},
		year =          {2017},
		doi =           {10.1145/3154979.3155004},
		isbn =          {9781450353243},
	}
	
	@inproceedings{Cristescu2015b,
		author =        {Cristescu, Ioana and Krivine, Jean and
			Varacca, Daniele},
		booktitle =     {{ICTAC} 2015 - 12th International Colloquium Cali,
			Colombia, October 29-31, 2015, Proceedings},
		pages =         {223--240},
		publisher =     {Springer},
		series =        {LNCS},
		title =         {Rigid Families for {CCS} and the {\(\pi\)}-calculus},
		volume =        {9399},
		year =          {2015},
		biburl =        {https://dblp.org/rec/conf/ictac/CristescuKV15.bib},
		bibsource =     {dblp computer science bibliography, https://dblp.org},
		doi =           {10.1007/978-3-319-25150-9_14},
		timestamp =     {Sat, 19 Oct 2019 20:17:56 +0200},
	}
	
	@article{Medic2020,
		author =        {Medić, Doriana and Claudio Antares Mezzina and
			Iain Phillips and Nobuko Yoshida},
		journal =       {Inf.\ Comput.},
		pages =         {104644},
		title =         {A parametric framework for reversible
			\emph{{\(\pi\)}}-calculi},
		volume =        {275},
		year =          {2020},
		doi =           {10.1016/j.ic.2020.104644},
	}
	
	@inproceedings{Mezzina2017,
		author =        {Claudio Antares Mezzina and Vasileios Koutavas},
		booktitle =     {{TASE} 2017, Sophia Antipolis, France, September
			13-15},
		pages =         {1--8},
		publisher =     {IEEE},
		title =         {A safety and liveness theory for total reversibility},
		year =          {2017},
		doi =           {10.1109/TASE.2017.8285635},
		isbn =          {978-1-5386-1924-7},
	}
	
	@article{Aubert2016jlamp,
		author =        {Aubert, Clément and Cristescu, Ioana},
		journal =       {J.\ Log.\ Algebr.\ Methods Program.},
		number =        {1},
		pages =         {77--106},
		title =         {Contextual equivalences in configuration structures
			and reversibility},
		volume =        {86},
		year =          {2017},
		doi =           {10.1016/j.jlamp.2016.08.004},
		issn =          {2352-2208},
	}
	
	@book{Sangiorgi2001b,
		author =        {Sangiorgi, Davide},
		publisher =     {CUP},
		title =         {Introduction to Bisimulation and Coinduction},
		year =          {2011},
		isbn =          {1107003636},
	}
	
	@article{Palamidessi2005,
		author =        {Palamidessi, Catuscia and Valencia, Frank D.},
		journal =       {Bull.\ EATCS},
		pages =         {105--125},
		title =         {Recursion vs Replication in Process Calculi:
			Expressiveness},
		volume =        {87},
		year =          {2005},
		url =           {http://eatcs.org/images/bulletin/beatcs87.pdf},
	}
	
	@inproceedings{Graversen2018,
		author =        {Graversen, Eva and Phillips, Iain and
			Yoshida, Nobuko},
		booktitle =     {{RC} 2018, Leicester, UK, September 12-14, 2018,
			Proceedings},
		pages =         {102--122},
		publisher =     {Springer},
		series =        {LNCS},
		title =         {Event Structure Semantics of (controlled) Reversible
			{CCS}},
		volume =        {11106},
		year =          {2018},
		doi =           {10.1007/978-3-319-99498-7_7},
	}
	
	@phdthesis{Krivine2006,
		author =        {Krivine, Jean},
		school =        {Université Paris 6 \& INRIA Rocquencourt},
		title =         {Algèbres de Processus Réversible - Programmation
			Concurrente Déclarative},
		year =          {2006},
		url =           {https://tel.archives-ouvertes.fr/tel-00519528},
	}
	
	@inproceedings{Danos2005,
		author =        {Danos, Vincent and Krivine, Jean},
		booktitle =     {CONCUR},
		editor =        {Abadi, Mart{\'\i}n and de Alfaro, Luca},
		pages =         {398--412},
		publisher =     {Springer},
		series =        {LNCS},
		title =         {Transactions in {RCCS}},
		volume =        {3653},
		year =          {2005},
		doi =           {10.1007/11539452_31},
		isbn =          {3-540-28309-9},
	}
	
	@inproceedings{Visme19,
		author =        {de Visme, Marc},
		booktitle =     {CONCUR},
		editor =        {Fokkink, Wan and van Glabbeek, Robert J.},
		pages =         {11:1--11:16},
		publisher =     {Schloss Dagstuhl - Leibniz-Zentrum f{\"{u}}r
			Informatik},
		series =        {LIPIcs},
		title =         {Event Structures for Mixed Choice},
		volume =        {140},
		year =          {2019},
		doi =           {10.4230/LIPIcs.CONCUR.2019.11},
		isbn =          {978-3-95977-121-4},
		url =           {http://www.dagstuhl.de/dagpub/978-3-95977-121-4},
	}
	
	@book{Hoare1985,
		author =        {Hoare, Charles Antony Richard},
		publisher =     {Prentice-Hall},
		title =         {Communicating Sequential Processes},
		year =          {1985},
		isbn =          {0-13-153271-5},
	}
	
	@techreport{Amadio2016,
		author =        {Amadio, Roberto M.},
		institution =   {Université Denis Diderot Paris 7},
		month =         dec,
		type =          {Lecture notes},
		title =         {Operational methods in semantics},
		year =          {2016},
		url =           {https://hal.archives-ouvertes.fr/cel-01422101},
	}
	
	@techreport{Aubert2020d,
		author =        {Aubert, Clément and Cristescu, Ioana},
		title =         {Structural Equivalences for Reversible Calculi of
			Communicating Systems (Oral communication)},
		year =          {2020},
		url =           {https://hal.archives-ouvertes.fr/hal-02571597},
	}
	
	@article{Rosenberg2003,
		author =        {Arnold L. Rosenberg},
		journal =       {Int.\ J.\ Found.\ Comput.\ Sci.},
		number =        {1},
		pages =         {3--17},
		title =         {Efficient Pairing Functions - and Why You Should
			Care},
		volume =        {14},
		year =          {2003},
		biburl =        {https://dblp.org/rec/journals/ijfcs/Rosenberg03.bib},
		bibsource =     {dblp computer science bibliography, https://dblp.org},
		doi =           {10.1142/S012905410300156X},
		timestamp =     {Sat, 27 May 2017 14:23:38 +0200},
	}
	
	@article{Szudzik2017,
		author =        {Matthew P. Szudzik},
		journal =       {CoRR},
		title =         {The Rosenberg-Strong Pairing Function},
		volume =        {abs/1706.04129},
		year =          {2017},
	}
	
	@phdthesis{Levy1978,
		author =        {Lévy, Jean-Jacques},
		month =         jan,
		school =        {Paris 7},
		title =         {Réductions correctes et optimales dans le
			lambda-calcul},
		year =          {1978},
		url =           {http://pauillac.inria.fr/~levy/pubs/78phd.pdf},
	}
	
	@inproceedings{Boudol1988,
		author =        {Boudol, Gérard and Castellani, Ilaria},
		booktitle =     {Linear Time, Branching Time and Partial Order in
			Logics and Models for Concurrency, School/Workshop,
			Noordwijkerhout, The Netherlands, May 30 - June 3,
			1988, Proceedings},
		pages =         {411--427},
		publisher =     {Springer},
		series =        {LNCS},
		title =         {Permutation of transitions: An event structure
			semantics for {CCS} and {SCCS}},
		volume =        {354},
		year =          {1988},
		doi =           {10.1007/BFb0013028},
		isbn =          {3-540-51080-X},
	}
	
	@inproceedings{Aubert2015d,
		author =        {Aubert, Clément and Cristescu, Ioana},
		booktitle =     {ICE 2015},
		pages =         {68--95},
		series =        {EPTCS},
		title =         {Reversible Barbed Congruence on Configuration
			Structures},
		volume =        {189},
		year =          {2015},
		doi =           {10.4204/EPTCS.189.7},
	}
	
	@inproceedings{Medic2016,
		author =        {Medić, Doriana and Mezzina, Claudio Antares},
		booktitle =     {{RC} 2016},
		editor =        {Devitt, Simon J. and Lanese, Ivan},
		pages =         {36--51},
		publisher =     {Springer},
		series =        {LNCS},
		title =         {Static {VS} Dynamic Reversibility in {CCS}},
		volume =        {9720},
		year =          {2016},
		doi =           {10.1007/978-3-319-40578-0_3},
		isbn =          {978-3-319-40577-3},
	}
	
	@inproceedings{Cristescu2013,
		author =        {Cristescu, Ioana and Krivine, Jean and
			Varacca, Daniele},
		booktitle =     {LICS},
		pages =         {388--397},
		publisher =     {IEEE Computer Society},
		title =         {A Compositional Semantics for the Reversible
			p-Calculus},
		year =          {2013},
		doi =           {10.1109/LICS.2013.45},
		isbn =          {978-1-4799-0413-6},
	}
	
	@inproceedings{Lanese2013,
		author =        {Ivan Lanese and Michael Lienhardt and
			Claudio Antares Mezzina and Alan Schmitt and
			Jean{-}Bernard Stefani},
		booktitle =     {ESOP},
		pages =         {370--390},
		publisher =     {Springer},
		series =        {LNCS},
		title =         {Concurrent Flexible Reversibility},
		volume =        {7792},
		year =          {2013},
		doi =           {10.1007/978-3-642-37036-6_21},
		isbn =          {978-3-642-37035-9},
	}
	
	@inproceedings{Lanese2020,
		author =        {Ivan Lanese and Iain C. C. Phillips and
			Irek Ulidowski},
		booktitle =     {{FOSSACS}, Dublin, Ireland, April 25-30, 2020,
			Proceedings},
		pages =         {442--461},
		publisher =     {Springer},
		series =        {LNCS},
		title =         {An Axiomatic Approach to Reversible Computation},
		volume =        {12077},
		year =          {2020},
		doi =           {10.1007/978-3-030-45231-5_23},
		isbn =          {978-3-030-45230-8},
	}
	
	@article{Busi2009,
		author =        {Nadia Busi and Maurizio Gabbrielli and
			Gianluigi Zavattaro},
		journal =       {MSCS},
		number =        {6},
		pages =         {1191--1222},
		title =         {On the expressive power of recursion, replication and
			iteration in process calculi},
		volume =        {19},
		year =          {2009},
		doi =           {10.1017/S096012950999017X},
	}
	
	@article{Phillips2007,
		author =        {Phillips, Iain and Ulidowski, Irek},
		journal =       {Electron.\ Notes Theor.\ Comput.\ Sci.},
		number =        {1},
		pages =         {93--108},
		title =         {Reversibility and Models for Concurrency},
		volume =        {192},
		year =          {2007},
		doi =           {10.1016/j.entcs.2007.08.018},
	}
	
	@techreport{Bednarczyk1991,
		author =        {Bednarczyk, Marek A.},
		institution =   {Instytut Podstaw Informatyki PAN filia w Gdańsku},
		title =         {Hereditary History Preserving Bisimulations or What
			is the Power of the Future Perfect in Program Logics},
		year =          {1991},
		url =           {http://www.ipipan.gda.pl/~marek/papers/historie.ps.gz},
	}
	
	@inproceedings{Aubert2020b,
		author =        {Aubert, Clément and Cristescu, Ioana},
		booktitle =     {CONCUR},
		pages =         {13:1--13:24},
		publisher =     {Schloss Dagstuhl - Leibniz-Zentrum f{\"{u}}r
			Informatik},
		series =        {LIPIcs},
		title =         {How Reversibility Can Solve Traditional Questions:
			The Example of Hereditary History-Preserving
			Bisimulation},
		volume =        {2017},
		year =          {2020},
		doi =           {10.4230/LIPIcs.CONCUR.2020.13},
	}
\end{filecontents*}
\end{document}